\documentclass[aps,pra,reprint,superscriptaddress,nofootinbib,longbibliography]{revtex4-2}

\usepackage{graphicx}
\usepackage{multirow}
\usepackage{amsmath,amssymb,amsfonts}
\usepackage{amsthm}
\usepackage{mathrsfs}
\usepackage{xcolor}
\usepackage{textcomp}
\usepackage{booktabs}
\usepackage{array}
\usepackage{algorithmicx}
\usepackage{algpseudocode}
\usepackage{listings}
\usepackage{bm}
\usepackage{mathtools}
\usepackage[colorlinks=true,linkcolor=blue,citecolor=blue,urlcolor=blue]{hyperref}

\graphicspath{{figure/}}
\theoremstyle{plain}
\newtheorem{theorem}{Theorem}
\newtheorem{proposition}[theorem]{Proposition}
\theoremstyle{definition}

\theoremstyle{remark}

\newcounter{algorithmblock}
\renewcommand{\thealgorithmblock}{\arabic{algorithmblock}}
\newenvironment{algorithmblock}[2][]{%
	\refstepcounter{algorithmblock}%
	\par\medskip\noindent\begin{minipage}{\columnwidth}%
		\hrule\vspace{0.4em}%
		\noindent\textbf{Algorithm~\thealgorithmblock. #2}%
		\if\relax\detokenize{#1}\relax\else\label{#1}\fi%
		\vspace{0.4em}\hrule\vspace{0.4em}%
	}{%
		\vspace{0.4em}\hrule%
	\end{minipage}\par\medskip%
}

\begin{document}
	
	\title{Defect-Adaptive Lattice Surgery on Irregular Boundary Surface-Code Patches}
	
	\author{Gunsik Min}
	\email{mgs3351@korea.ac.kr}
	\affiliation{School of Electrical Engineering, Korea University, Seoul 02841, Republic of Korea}
	\author{Youshin Chung} \email{jeongys604@korea.ac.kr} \affiliation{School of Electrical Engineering, Korea University, Seoul 02841, Republic of Korea}
    
	\author{Yujin Kang}
	\email{yujin20@korea.ac.kr}
	\affiliation{School of Electrical Engineering, Korea University, Seoul 02841, Republic of Korea}
	
	\author{Jun Heo}
	\email{junheo@korea.ac.kr}
	\affiliation{School of Electrical Engineering, Korea University, Seoul 02841, Republic of Korea}

	\begin{abstract}
		Defect-adaptive surface-code methods construct valid logical patches on imperfect hardware, but fault-tolerant computation also requires certifying that requested logical operations remain executable on the resulting irregular geometries. We address this problem for lattice-surgery merges whose interfaces intersect deformed boundaries, unavailable checks, or gauge-inferred operators. Our compiler constructs an admissible seam-measurement set whose elements associate effective seam supports with raw-outcome selectors and schedule metadata. A GF(2) row-space test then determines whether the requested joint parity is realizable modulo pre-merge stabilizer constraints, returning either an explicit parity-extraction rule or a dual witness of non-realizability within the evaluated set. Paired ablations isolate the contribution of the proposed measurement-set construction rules. We further apply the test to configurations generated by Snakes and Ladders (SnL) on 15,000 clustered boundary-defect maps. Adding the proposed boundary-reconstruction rows increases the certification rate from 61.88\% to 62.79\%, with the gain concentrated in strongly clustered regimes. The framework therefore provides a compiler step that separates defect-adaptive patch construction from the certification and extraction of lattice-surgery parity measurements.
	\end{abstract}
	
	\maketitle
	
	\section{Introduction}
	
	Quantum error correction (QEC) is essential for scaling quantum processors from noisy few-qubit demonstrations to practical fault-tolerant computation~\cite{Gottesman1997}. Among the leading QEC architectures, the surface code is especially attractive because it requires only two-dimensional nearest-neighbor connectivity and exhibits a comparatively high threshold under realistic noise assumptions~\cite{Kitaev2003,Dennis2002,Fowler2012,Raussendorf2007,Wang2003}. In the ideal setting, increasing the code distance exponentially suppresses logical failure below threshold. In practice, however, large-scale processors must also tolerate fabrication defects and run-time component failures that violate the assumption of a static defect-free lattice.
	
	Recent experimental demonstrations have achieved below-threshold operation in small surface-code patches~\cite{GoogleAI2023,Krinner2022,Acharya2024}, making defect tolerance an increasingly practical concern. Because surface-code protection is geometric, defects can distort boundaries, disable local checks, alter the homology classes of logical operators, and reduce the minimum weight of an undetectable error if left untreated. Considerable effort has therefore been devoted to \emph{defect-adaptive} surface-code constructions that preserve a valid code space on irregular lattices by combining local deformation, gauge checks, and super-stabilizers~\cite{Stace2009,Bombin2010,Brown2017,Auger2017,Strikis2023,Nagayama2017,Heng2024}. Recent bandage-like constructions have further improved this picture by preserving more active data qubits, lowering super-stabilizer weight, and maintaining larger effective distance in clustered-defect regimes~\cite{BandageAdapter}.
	
	These advances largely solve a \emph{patch adaptation} or \emph{memory} problem: given an imperfect device, construct a valid logical surface-code patch that protects stored information with low overhead~\cite{CampbellTerhalVuillot2017}. Fault-tolerant computation, however, requires more than storing a logical qubit. It also requires a reliable way to perform logical operations repeatedly on those adapted patches.
	
	For the surface code, lattice surgery provides a standard low-overhead route to logical Clifford operations through merge and split operations and joint logical-parity measurements~\cite{Horsman2012,Litinski2019,FowlerGidney2018}. In the defect-free setting, the seam introduced during a merge activates a regular family of local checks whose product equals the desired joint parity, such as $Z_L \otimes Z_L$ or $X_L \otimes X_L$. On defect-adapted patches, this regular picture can fail. The seam may intersect a deformed boundary, some seam checks may be reduced or shifted, some effective checks may exist only through gauge inference~\cite{Bombin2015,Paetznick2013}, and the relevant information may be distributed across multiple measurement rounds by alternating or shell-based schedules. The key question is therefore no longer simply \emph{which seam is activated}, but rather: Given an irregular, defect-adapted merged patch, which available measurements realize the target joint logical parity?
	
	This work addresses that operation-level problem. We view defect-adaptive lattice surgery as a parity-synthesis task: given an irregular but valid defect-adapted patch, the compiler determines whether the available gauge and stabilizer measurements generate the requested logical parity and, when they do, returns an explicit rule for extracting it from the recorded outcomes. This viewpoint turns irregular lattice surgery from an ad hoc geometric repair problem into a certified synthesis problem.
	
	The main contributions of this paper are fourfold. First, we distinguish the existence of a valid defect-adapted patch from the availability of a requested lattice-surgery parity on that patch. Second, we define an \emph{admissible seam-measurement set}. Each element specifies an effective seam support, the raw-outcome selector used to reconstruct its value, and the associated source and schedule metadata. An element is included only if it is measurable or inferable from the available raw outcomes, commutes with the retained opposite-type constraints, and satisfies the stated local orientation rule. Third, we apply standard GF(2) row-space membership to this set. The calculation returns either seam and raw-outcome selectors or a dual witness showing that the target is not generated by the evaluated measurements and pre-merge constraints. The novelty therefore lies not in the binary linear-algebra test itself, but in constructing the physically valid measurement set and converting a successful solution into an explicit parity-extraction rule. Fourth, we evaluate the method on configurations generated by SnL using matched clustered boundary-defect maps. This comparison tests the proposed certification step on the measurement structures supplied by a published defect-adaptation method; it is not intended to establish a uniform performance ranking.

	The rest of the paper is organized as follows. Section~\ref{sec:background} reviews defect-adaptive patches, irregular lattice-surgery seams, and related circuit-construction methods. Section~\ref{sec:method} defines the admissible seam-measurement set, the GF(2) test, explicit parity extraction, and dual infeasibility certificates. Section~\ref{sec:simulations} presents the paired internal ablation and the published-method integration study. Section~\ref{sec:conclusion} summarizes the scope and limitations of the method.
	
	\section{Background and preliminaries}
	\label{sec:background}
	
	\subsection{Defect-adaptive surface-code patches}
	
	We consider a planar surface-code patch defined on a set $\mathcal{Q}$ of active data qubits together with commuting $X$-type and $Z$-type parity constraints. To avoid convention-dependent terminology, we label boundaries by the type of logical operator that can terminate there. A single logical patch therefore has two disjoint $X$-boundaries and two disjoint $Z$-boundaries, and logical operators are represented by nontrivial chains connecting the corresponding opposite boundaries.
	
	Defects are modeled as unavailable data qubits, syndrome qubits, or couplers. If untreated, such defects can invalidate local stabilizer measurements, distort the boundary geometry, and reduce logical distance. Defect-adaptive surface-code methods preserve a valid code space by modifying the local stabilizer description: unsafe checks are removed or replaced, boundaries are deformed, and higher-weight effective checks are introduced when necessary to maintain commutation and logical consistency~\cite{Nagayama2017,Heng2024,Barrett2010}.
	
	A key ingredient in modern defect adaptation is the use of \emph{super-stabilizers}, namely products of nearby same-type gauge or stabilizer fragments that remain measurable even when the original local checks are broken. In bandage-like constructions, these super-stabilizers are chosen to bridge defect regions while keeping support localized and weight as small as possible. This substantially improves the memory performance of irregular patches, but it does not by itself specify how logical merge and split operations should be carried out on those patches.
	
	\subsection{Gauge inference and schedule dependence}
	
	On irregular patches, the effective check logically needed by the code is often not measured directly as a single parity operator. Instead, one measures lower-weight gauge operators and infers the effective check from their product~\cite{Bombin2015,Paetznick2013}. This distinction is especially important near defect clusters, where local commutation constraints or hardware limitations make direct measurement of a desired high-weight operator impractical.
	
	In addition, irregular patches often impose nonuniform measurement schedules. For example, $X$-type and $Z$-type super-check families may be measured in alternating cycles, or shell-based schedules may repeat one gauge family for several consecutive rounds before switching type. As a result, the quantity needed for a logical merge is generally not the outcome of a single physical seam check in one round, but a post-processed parity constructed from time-tagged gauge outcomes.
	
	\subsection{Lattice surgery on irregular seams}
	
	In defect-free lattice surgery, a merge activates a regular family of seam checks and the product of those checks yields the desired joint logical parity. On defect-adapted patches, that statement is no longer automatic. A seam check may be absent, reduced, shifted by boundary deformation, bundled into a seam super-check, or recoverable only by gauge inference. The seam itself may no longer form a regular strip~\cite{Vuillot2019}.
	
	This motivates the central problem of the present work: given two defect-adapted patches and a requested lattice-surgery parity measurement, determine whether the desired joint logical parity can be realized by the seam-related measurements that are actually available. In this paper we focus on the merge primitive, which is the setting used in the numerical study, and construct the corresponding explicit parity-extraction rule when the requested parity is realizable.

	\subsection{Relation to prior defect-adaptive lattice-surgery work}
	\label{sec:related}
	
	The most closely related prior work is the defect-adaptive lattice-surgery framework
	of Leroux et al.~\cite{SnL}, hereafter SnL. SnL introduced ancilla-repurposing
	primitives for defective seam-check ancillas and showed how different repurposing
	orientations can either preserve or reduce the relevant code distance. This provides
	an important local mechanism for maintaining lattice-surgery measurements in the
	presence of defective measurement resources.
	
	The present work addresses a different, complementary layer of the problem. First, we consider seam-boundary data-qubit defects, where the data-qubit
	support of the intended seam check may itself be damaged. In this case, the issue is
	not only how to replace a missing measurement ancilla, but also how to decide which
	defect-adapted seam operator should represent the desired joint logical parity on the
	irregular geometry. This motivates the fused boundary seam-check construction and the
	opposite-type commutation tests illustrated in Fig.~\ref{fig:data_defect_merge}.
	
	Second, we use a GF(2) row-space layer as a compiler certification step after the
	admissible seam-measurement set has been constructed. Row-space membership over GF(2) is
	standard in stabilizer language; the point of the present construction is to determine
	which native, repaired, gauge-inferred, and fused seam measurements are physically available, and how they combine with the pre-merge constraints for an irregular
	lattice-surgery merge. When the row-space condition is satisfied, the same solution
	also specifies the explicit post-processing rule over the measured gauge outcomes.
	In this sense, the present framework should be viewed not as a replacement for local
	defect-repair primitives such as those in SnL, but as a compiler step that
	certifies and extracts the logical merge parity after such defect-adapted seam
	ingredients have been constructed.
	
	Recent detecting-region and dropout-based approaches provide another closely related
	line of work. In particular, LUCI~\cite{DebroyLUCI2024} and subsequent
	compilation frameworks such as ACID~\cite{WolanskiACID2025} and optimized
	measurement-schedule constructions~\cite{AnkerDebroy2025} construct or optimize
	fault-tolerant surface-code syndrome-extraction circuits in the presence of missing
	qubits, couplers, or measurement resources by modifying the circuit schedule and the
	associated detector structure. These methods address the circuit-realization problem
	for defective hardware and can preserve important distance properties for several
	classes of dropouts, at the cost of modified temporal structure or optimized
	measurement schedules.
	
	Our contribution is complementary to this line of work. Rather than optimizing the
	full syndrome-extraction circuit for a defective patch, we focus on the operation-level
	question that arises once a defect-adapted seam-measurement set is available: whether the
	intended lattice-surgery joint logical parity is generated by the native seam rows,
	repaired or gauge-inferred seam rows, fused seam super-checks, and separated
	pre-merge constraints. Detecting-region or dropout-based circuit constructions can
	therefore be viewed as possible sources of physically available measurements and detector
	information, while the present GF(2) synthesis layer certifies whether those rows
	realize the requested logical merge observable and returns the corresponding
	raw-outcome selector.
	
	Horsman et al.~\cite{Horsman2012}, Litinski~\cite{Litinski2019,LitinskiVonOppen2018},
	and Fowler and Gidney~\cite{FowlerGidney2018} establish the defect-free reference
	for lattice surgery on ideal patches. Landahl and Ryan-Anderson~\cite{Landahl2014}
	extend lattice surgery to color codes. Strikis et al.~\cite{Strikis2023} and Siegel
	et al.~\cite{Siegel2023} address boundary deformation and distance preservation in
	the memory setting. The present work connects these two lines by asking how the
	logical seam parity itself should be certified and extracted once the patch and seam
	geometry have become irregular.
	
	Operationally, local defect repairs and dropout-compiled detector structures can supply measurements to the proposed set. For example, ancilla repurposing provides an inferred seam measurement together with the raw gauge outcomes needed to reconstruct it. The measurement is included only if it passes the same commutation and implementation checks as the native seam measurements. The GF(2) step then determines whether the resulting set, together with the pre-merge constraints, generates the requested logical parity. Boundary data-qubit defects can additionally require a fused boundary seam check that replaces several broken native measurements.
	
	Section~\ref{sec:simulations} evaluates this relationship on matched clustered boundary-defect maps. We first test the measurement structure extracted from each upstream configuration and then repeat the calculation after adding only the proposed boundary-reconstruction rows. The separate restricted-set study remains an internal component ablation. This design compares the two constructions using the same extracted measurement structures without claiming that either framework globally outperforms the other.
	
	\section{Defect-adaptive lattice-surgery method}
	\label{sec:method}
	
	\subsection{Problem setting and overview}
	
	We consider lattice-surgery merge operations between two defect-adapted surface-code patches.
	In the defect-free setting, the target logical parity is obtained from a regular family of seam checks along the merge interface.
	With seam-boundary defects, however, this regular picture can fail:
	a native seam check may be broken, shifted, reduced, or replaced by an inferred effective check.
	The seam may therefore cease to be a uniform strip of directly measured operators.
	
	The central problem addressed in this work is the following:
	given a requested merge operation on two irregular defect-adapted patches, determine whether the desired joint logical parity can still be realized from the seam-related measurements that remain physically available, and if so, construct the corresponding explicit parity-extraction rule.
	
	Our method separates this task into three layers.
	First, we identify the set of defect-adapted seam measurements that remains available on the irregular geometry.
	Second, we determine whether the requested logical parity lies in the row space generated by the seam measurements and the pre-merge stabilizer constraints.
	Third, we convert the successful seam relation into an explicit post-processing rule over the measured outcomes.
	Under this viewpoint, irregular lattice surgery becomes a defect-adaptive parity-synthesis problem rather than a purely geometric seam-drawing problem.
	
	Although the examples in this paper focus on a two-patch merge, the synthesis step is not tied to a particular local defect pattern. Its inputs are an admissible seam-measurement set, the pre-merge stabilizer-constraint matrix $H_P^{\rm sep}$, and the target representative $\bm\ell$. Its output is either an explicit parity-extraction rule or a dual infeasibility certificate. The same representation can be used for other defect-adapted parity measurements whenever their available measurements and raw-outcome rules admit a binary-support description.
	
	\subsection{Defect-adapted seam construction}
	
	The figures in this subsection are mechanism-level schematics rather than members of the statistical defect ensemble. Exact binary supports for the success and failure witnesses are given in Appendix~\ref{app:a47a57b41_example}, whereas ensemble-level statistics are reported separately in Section~\ref{sec:simulations}.
	
	The geometric meaning of the proposed method is illustrated in Figs.~\ref{fig:data_defect_merge} and~\ref{fig:ancilla_defect_merge}.
	The two figures are deliberately chosen to represent two qualitatively different defect-adaptive seam mechanisms.
	Figure~\ref{fig:data_defect_merge} illustrates the support-changing case: a boundary data-qubit defect directly damages the native seam region and forces a redefinition of the merge on a boundary-deformed patch.
	Figure~\ref{fig:ancilla_defect_merge} illustrates the support-preserving case: a seam-check ancilla defect leaves the intended data-qubit support intact but removes one local measurement resource, so the seam value must be reconstructed through an effective measurement primitive.
	
	\begin{figure*}[t]
		\centering
		
		\begin{minipage}[b]{0.49\textwidth}
			\centering
			\includegraphics[width=\linewidth]{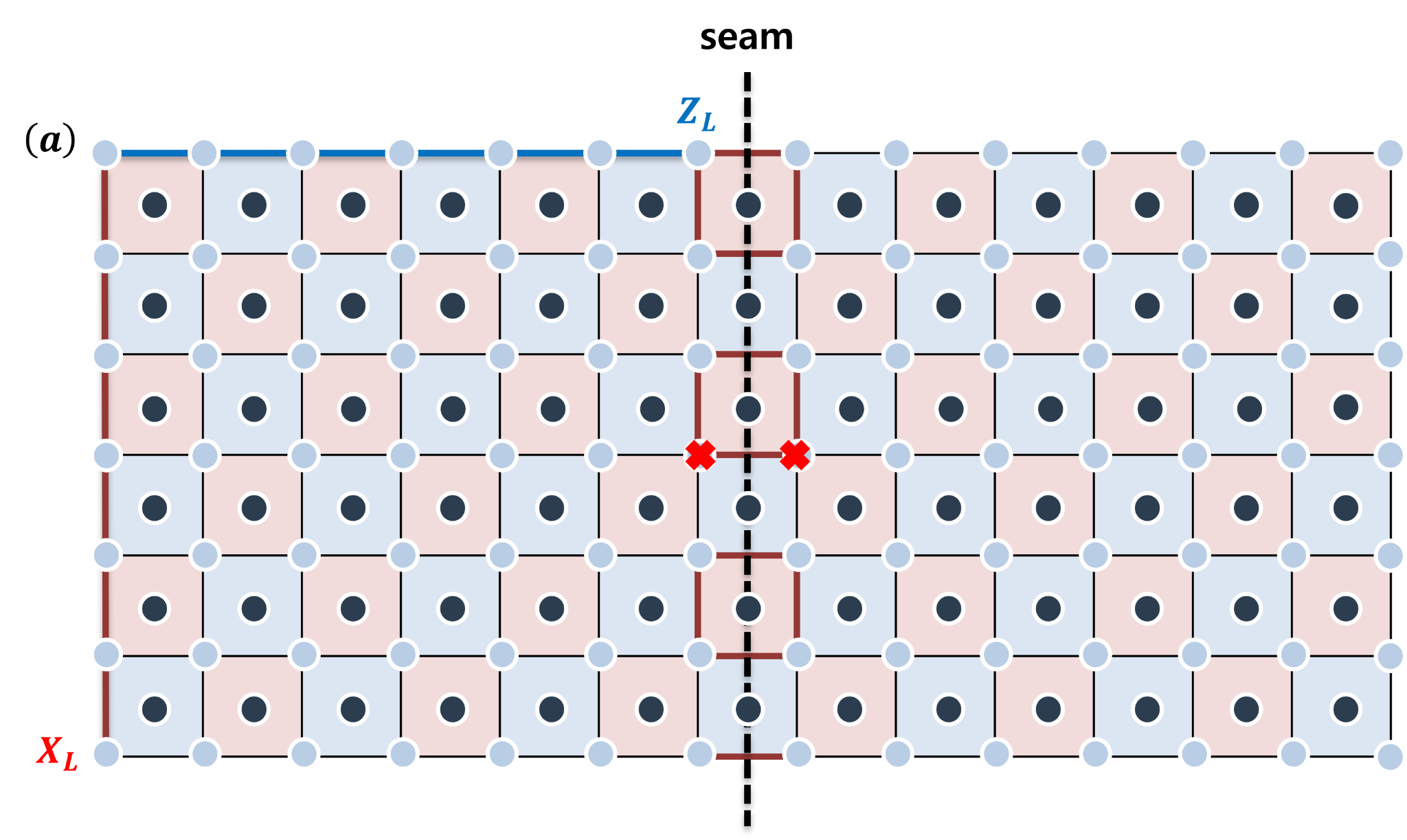}
		\end{minipage}
		\hfill
		\begin{minipage}[b]{0.49\textwidth}
			\centering
			\includegraphics[width=\linewidth]{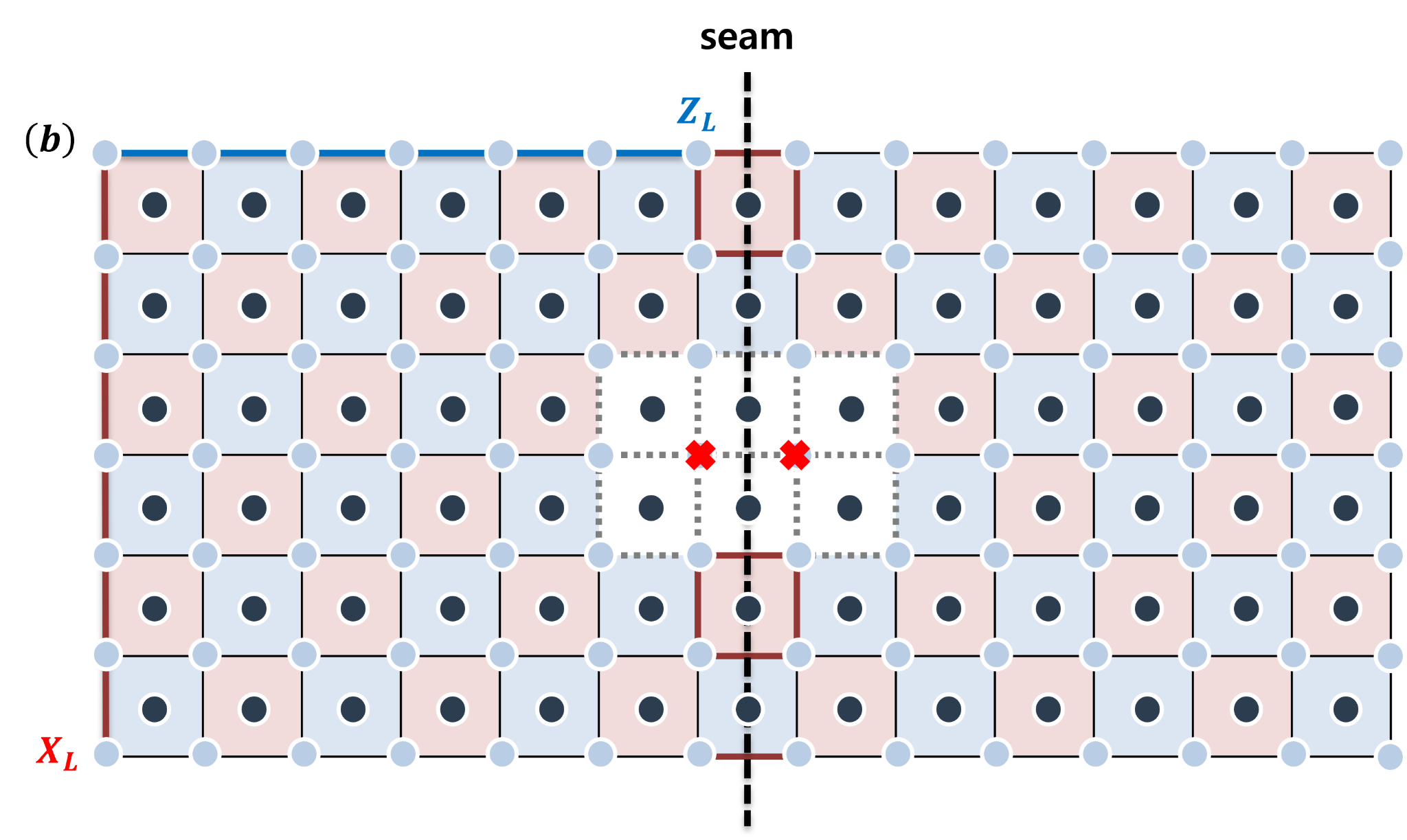}
		\end{minipage}
		
		\vspace{0.6em}
		
		\begin{minipage}[b]{0.49\textwidth}
			\centering
			\includegraphics[width=\linewidth]{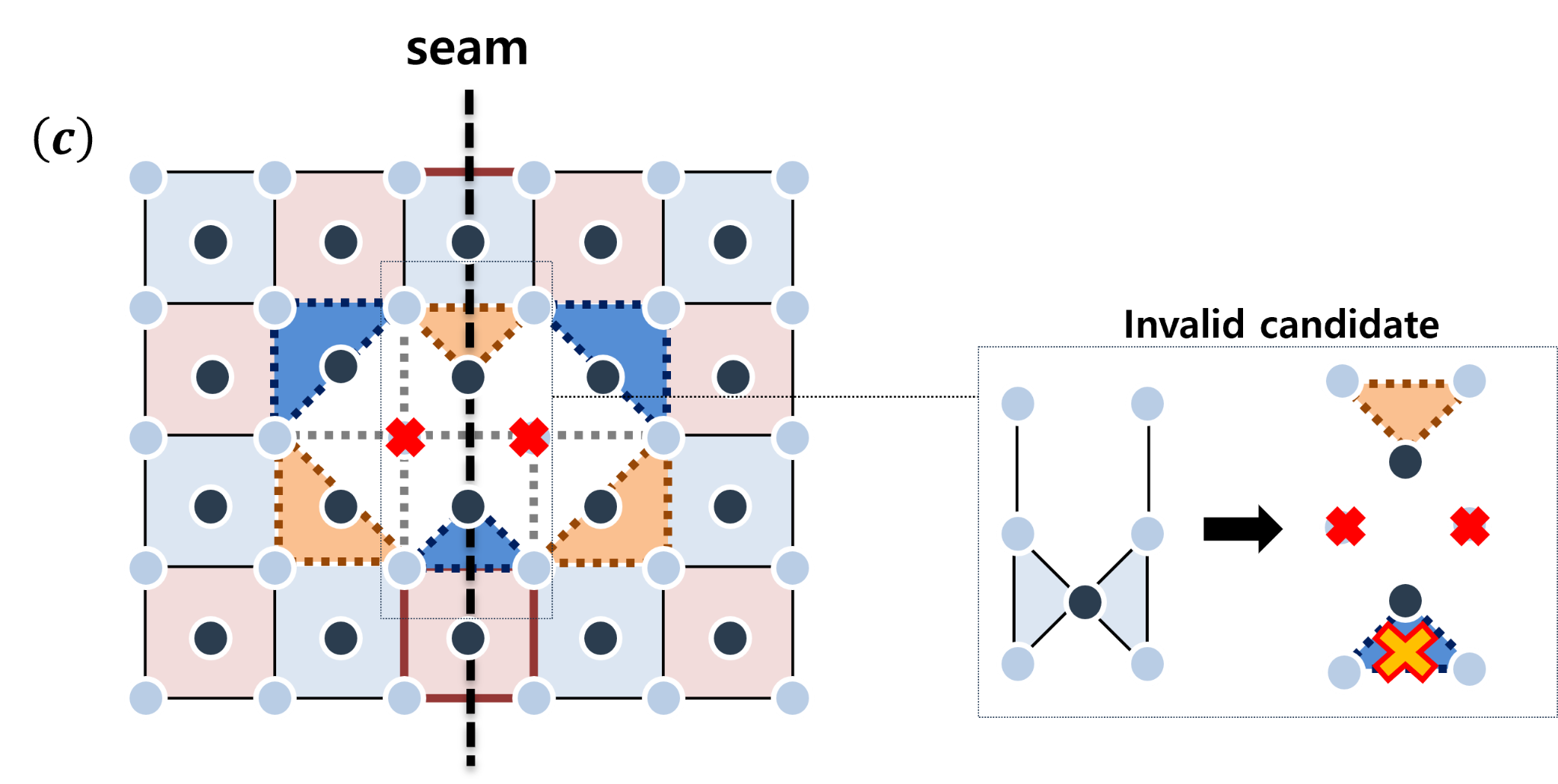}
		\end{minipage}
		\hfill
		\begin{minipage}[b]{0.49\textwidth}
			\centering
			\includegraphics[width=\linewidth]{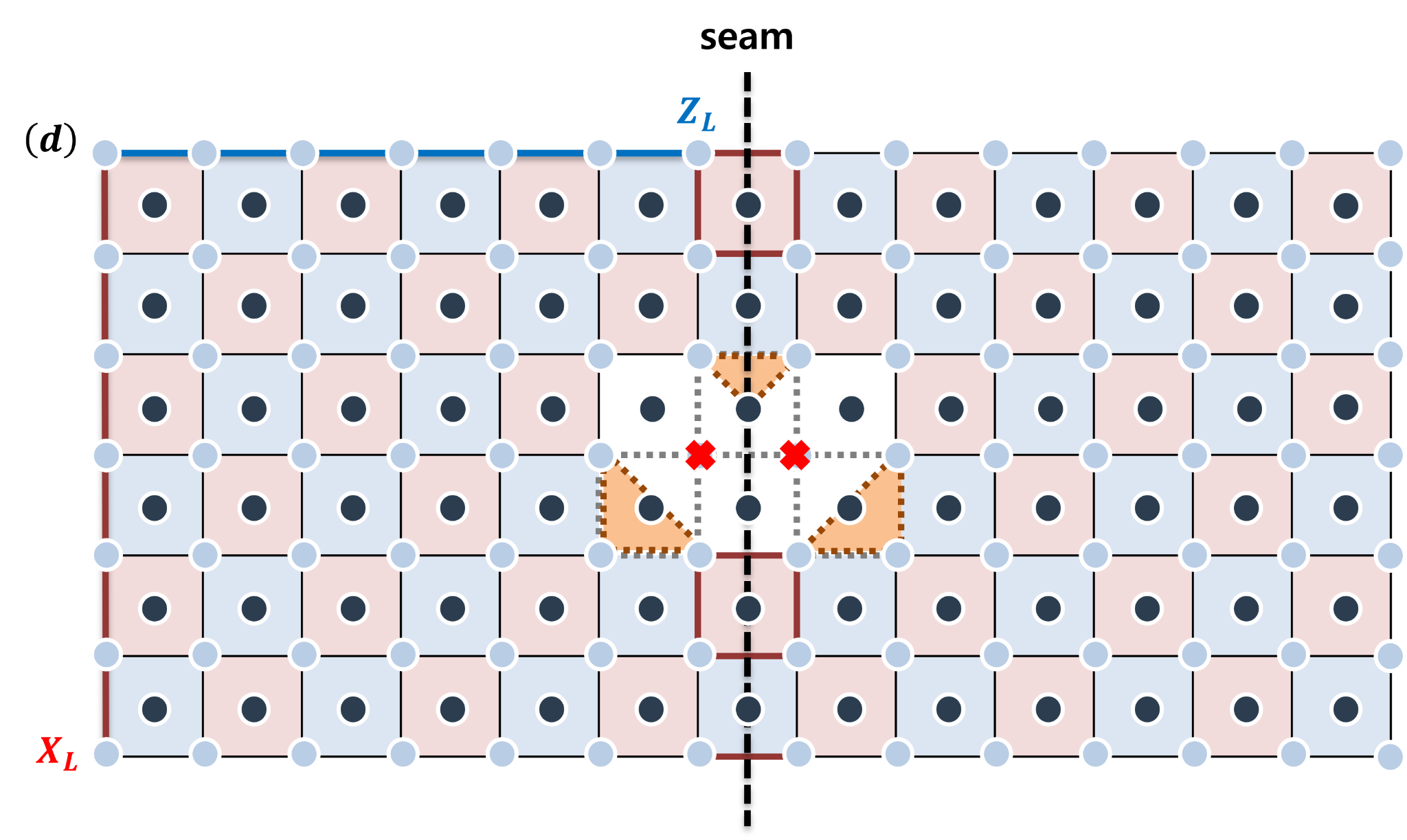}
		\end{minipage}
		
		\vspace{0.6em}
		
		\begin{minipage}[b]{\textwidth}
			\centering
			\includegraphics[width=\linewidth]{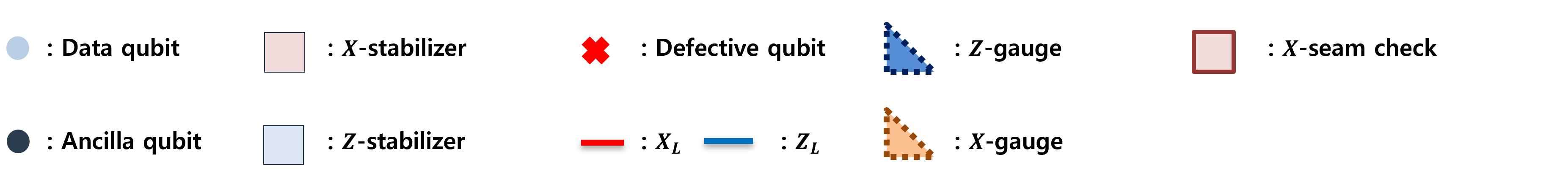}
		\end{minipage}
		
		\caption{%
			\textbf{Boundary data-qubit defects during an $X_L \otimes X_L$ merge.}
			(a) A representative seam-boundary defect pair destroys the native seam region of the intended merge, so the regular defect-free seam picture no longer applies.
			(b) The broken neighborhood is reconstructed in terms of locally available gauge fragments and defect-adapted effective checks.
			(c) A naive opposite-type completion is invalid because the required cross-patch gauge bridge is not measurable with the available resources. The surviving reduced opposite-type fragments are therefore each incompatible with the proposed $X$-type seam super-check.
			(d) The merge must therefore be carried out on the boundary-deformed patch using a defect-adapted seam-measurement set.
			The logical parity is synthesized from the seam-related measurements available on the deformed geometry rather than read off from the textbook seam strip. This is a mechanism-level schematic; statistical validation is reported separately in Figs.~\ref{fig:internal_ablation} and~\ref{fig:snl_integration}.
		}
		\label{fig:data_defect_merge}
	\end{figure*}
	
	We begin with the boundary data-qubit defect case in Fig.~\ref{fig:data_defect_merge}.
	Here the seam is damaged at the level of data-qubit support.
	As a result, one or more seam-adjacent checks are destroyed together with the local commuting structure that supported the textbook merge.
	In this situation, the desired parity can no longer be identified with the regular defect-free seam strip.
	Instead, the broken neighborhood must first be re-expressed in terms of locally available gauge fragments and defect-adapted effective checks, as illustrated in Fig.~\ref{fig:data_defect_merge}(b).
	
	This reconstruction is algebraically necessary, not merely a change of geometric representation.
	A geometrically plausible candidate seam measurement can still be invalid if the corresponding opposite-type completion fails the local commutation test.
	This is precisely the role of Fig.~\ref{fig:data_defect_merge}(c): the measurement is not rejected simply because of an abstract parity mismatch, but because the necessary cross-patch opposite-type bridge is not physically or algebraically available.
	Once that bridge is excluded, the remaining reduced opposite-type fragments are each incompatible with the proposed $X$-type seam super-check, so the measurement must be rejected.
	The merge is therefore redefined on the boundary-deformed patch and carried out using the admissible seam-measurement set of Fig.~\ref{fig:data_defect_merge}(d).
	A coordinate-level worked realization of this rejection-and-acceptance step is deferred to Appendix~\ref{app:a47a57b41_example}.
	
	This interpretation is consistent with the broader defect-adaptive literature.
	For boundary-touching defects, the appropriate viewpoint is that the code after deformation should be regarded as an ordinary surface code with a deformed boundary, rather than as the original code with a missing local measurement.
	In particular, when an $X$-boundary is deformed, the $Z$-distance need not be reduced, and conversely for a deformed $Z$-boundary~\cite{Siegel2023}.
	For our purposes, this means that the logical meaning of the merge must be read from the deformed seam geometry itself.
	Accordingly, Fig.~\ref{fig:data_defect_merge} should be understood as a seam-geometry reconstruction problem: the support of the effective seam-measurement set changes, while the intended logical parity class is preserved.
	
	\begin{figure*}[t]
		\centering
		
		\begin{minipage}[b]{0.49\textwidth}
			\centering
			\includegraphics[width=\linewidth]{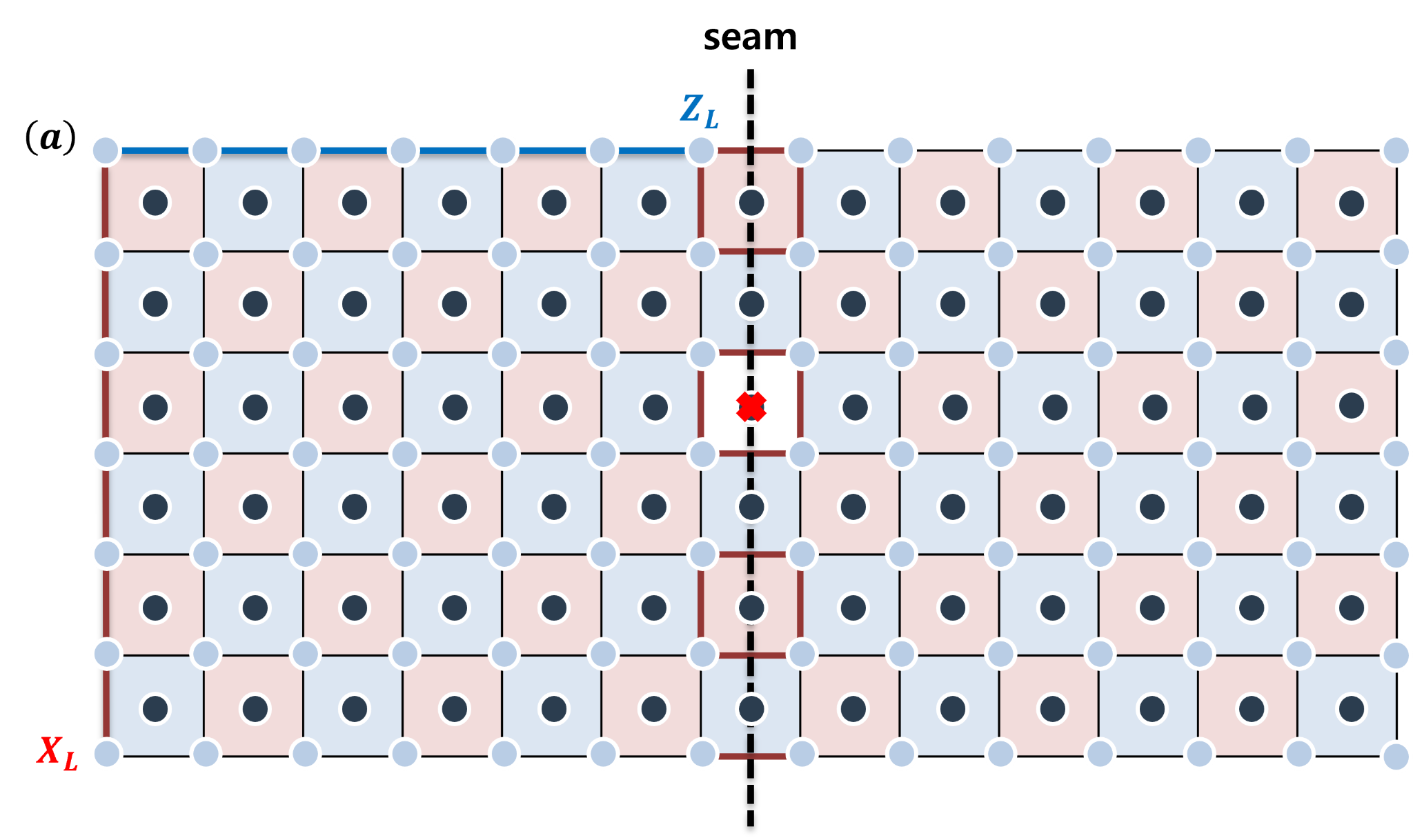}
		\end{minipage}
		\hfill
		\begin{minipage}[b]{0.49\textwidth}
			\centering
			\includegraphics[width=\linewidth]{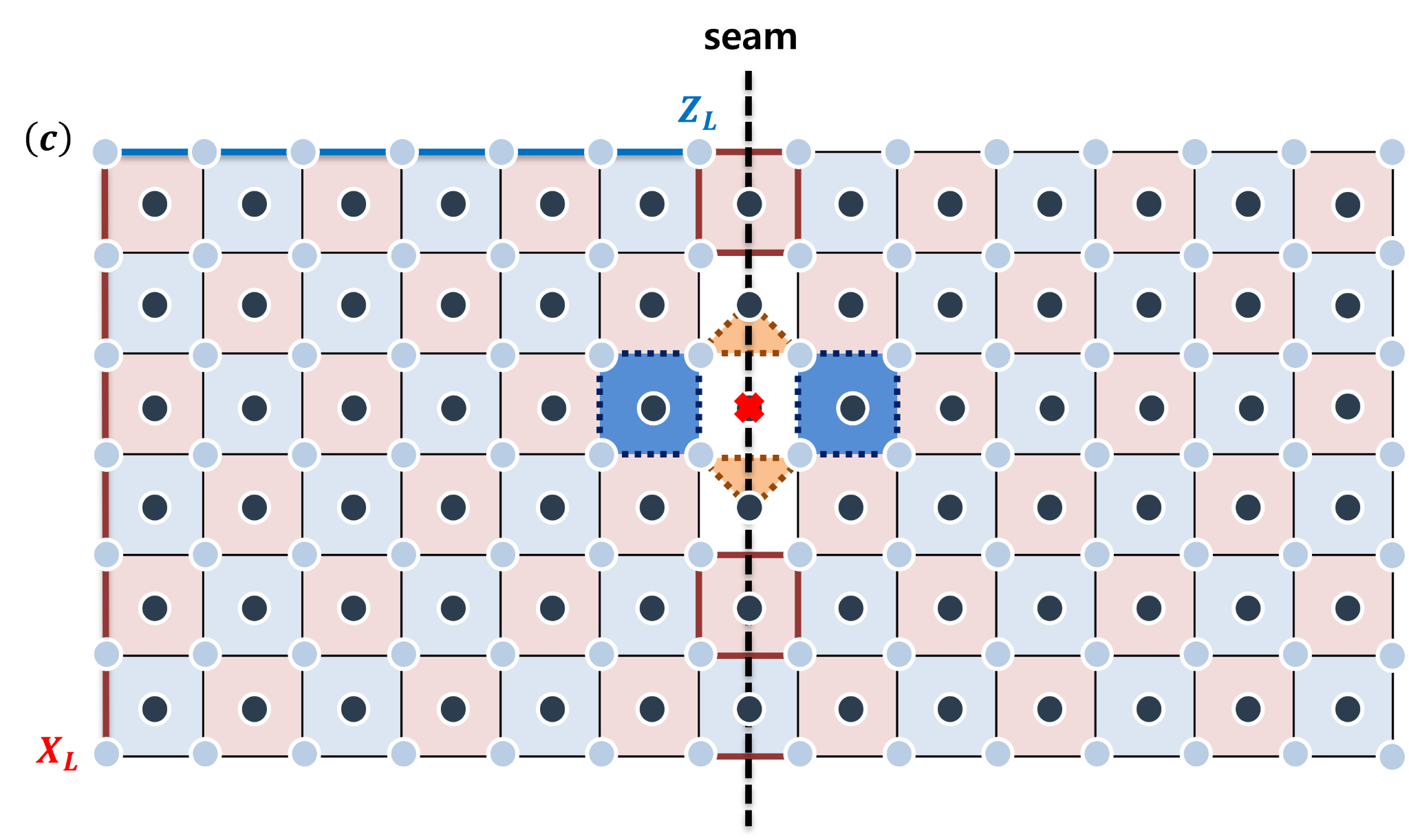}
		\end{minipage}
		
		\vspace{0.6em}
		
		\begin{minipage}[b]{0.62\textwidth}
			\centering
			\includegraphics[width=\linewidth]{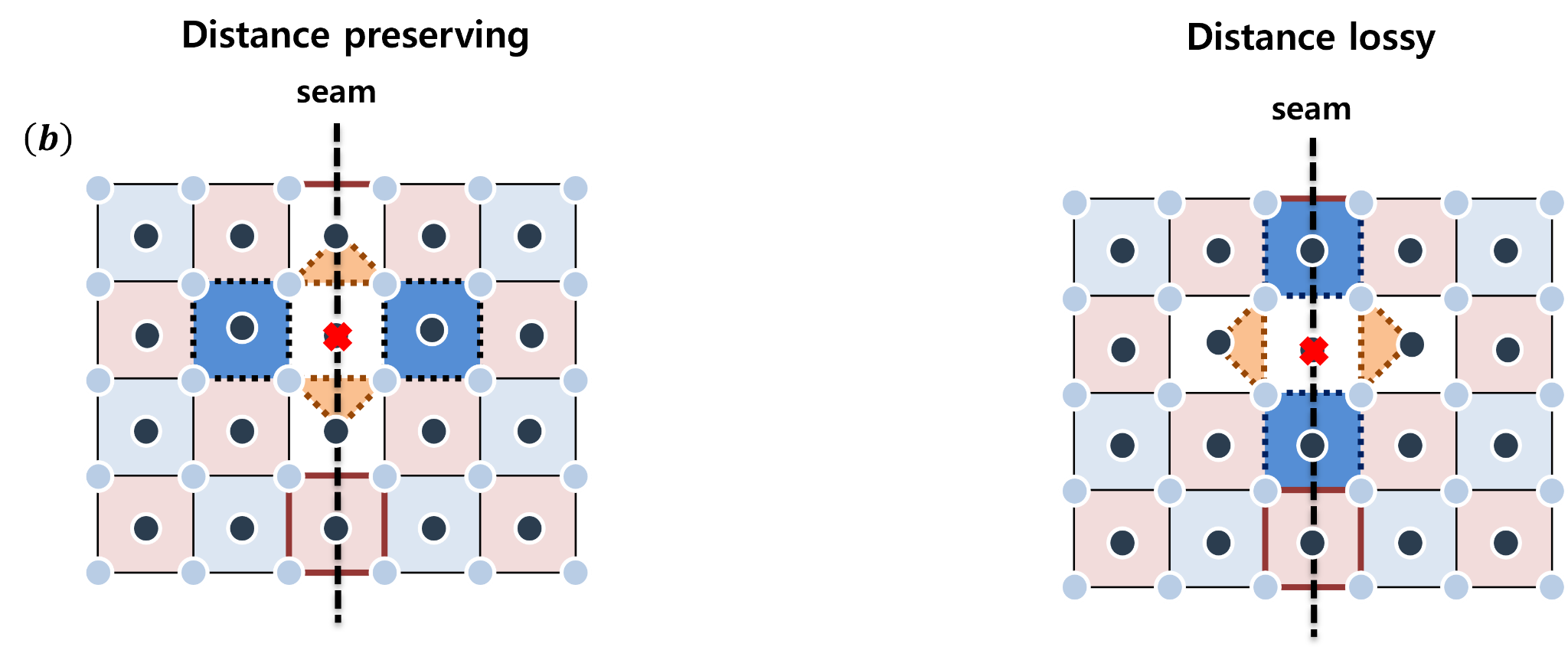}
		\end{minipage}
		
		\caption{%
			\textbf{Seam-check ancilla defect and support-preserving seam reconstruction.}
			(a) A defective seam ancilla removes one native seam measurement from the merge interface while leaving the intended data-qubit support intact.
			(b) The missing seam measurement is replaced by a pair of repurposed local gauge measurements whose product reconstructs the seam value.
			(c) The adjacent opposite-type larger checks are promoted to gauges, and their product defines an induced super-stabilizer that restores the local commutation structure.
			In contrast to the boundary data-qubit defect in Fig.~\ref{fig:data_defect_merge}, this case is handled primarily by support-preserving measurement reconstruction rather than by a larger boundary deformation.
			Different repurposing orientations can have different distance consequences in prior ancilla-repurposing work; here this information is used only as a local selection filter and not as a proof of the global distance of the proposed compiled operation. This is a mechanism-level schematic rather than a sampled statistical instance.
		}
		\label{fig:ancilla_defect_merge}
	\end{figure*}
	
	A complementary situation arises for the seam-check ancilla defect in Fig.~\ref{fig:ancilla_defect_merge}.
	Here the support of the intended seam check on the data qubits remains intact, but the ancilla that would directly measure that seam operator is unavailable.
	This is therefore not primarily a boundary-deformation problem.
	Instead, it is a measurement-reconstruction problem.
	Following the ancilla-repurposing construction of Ref.~\cite{SnL}, the missing native seam check is split into two lower-weight same-type gauge measurements carried out by neighboring resources.
	Their product reconstructs the seam value, while the adjacent opposite-type larger checks become gauges whose product defines an induced super-stabilizer~\cite{SnL}.
	In other words, the seam-check ancilla defect preserves the target support but changes the measurement primitive used to access it.
	
	This distinction is important for the interpretation of Fig.~\ref{fig:ancilla_defect_merge}.
	Panel (b) is the key operational step: the direct seam measurement is replaced by a pair of local gauge measurements.
	Panel (c) then shows the induced opposite-type constraint that restores local algebraic consistency.
	Thus, in contrast to Fig.~\ref{fig:data_defect_merge}, the seam-check ancilla defect does not require a support-changing seam reconstruction.
	Rather, it admits a support-preserving reconstruction in which the desired seam value is inferred from repurposed gauges and accepted only together with the corresponding opposite-type effective constraint.
	
	The ancilla-defect case also introduces an important distance consideration.
	Prior ancilla-repurposing analysis shows that two orientations are generally available around an isolated ancilla defect: one is distance-preserving, whereas the other is lossy and can reduce the relevant code distance by $2$ along the direction of the induced weight-8 super-stabilizer~\cite{SnL}.
	Near a boundary, this choice becomes especially important, because the wrong orientation can align the induced large check with the vulnerable logical direction and thereby shorten the minimum undetectable path.
	For this reason, our seam-measurement set excludes the orientation identified as locally lossy by that local analysis whenever the alternative local primitive is available.
	This rule is an admissibility choice motivated by a known local failure mechanism; it is not used below as a substitute for a shortest-logical-distance calculation of the complete proposed operation.
	
	Taken together, Figs.~\ref{fig:data_defect_merge} and~\ref{fig:ancilla_defect_merge} motivate the common abstraction used throughout this paper.
	The object employed in lattice surgery is not necessarily the textbook seam strip itself, but an \emph{effective seam-measurement set} defined on the actual irregular geometry and measurement schedule.
	Depending on the defect type, this set may include a boundary-deformed seam representative, reduced boundary checks, locally reconstructed seam segments, or gauge-inferred effective checks.
	The next two subsections convert this geometric picture into a certified and explicit parity-extraction rule: Section~\ref{sec:certified} formalizes the effective seam-measurement set in the parity-synthesis problem for the target logical parity, whereas Section~\ref{sec:execution} maps the successful seam relation to the actual gauge outcomes measured under the chosen schedule.
	
	\subsection{Admissible seam-measurement set}
	\label{sec:measurement_set}

	For a requested Pauli type $P\in\{X,Z\}$, we define the admissible seam-measurement set as
	\begin{equation}
		\mathcal M_P=\{(\bm e_j,\bm o_j,\sigma_j)\}_{j=1}^{m}.
		\label{eq:measurement_set}
	\end{equation}
	Here, $\bm e_j$ is the binary support of an effective $P$-type seam operator on the active data-qubit basis, $\bm o_j$ is the binary selector that reconstructs its outcome from the schedule-tagged raw measurement record, and $\sigma_j$ stores its source, orientation, and scheduling metadata. The effective-row and inference matrices used below are
	\begin{equation}
		E_P=\begin{bmatrix}\bm e_1\\ \vdots\\ \bm e_m\end{bmatrix},
		\qquad
		\widetilde O_P=\begin{bmatrix}\bm o_1\\ \vdots\\ \bm o_m\end{bmatrix}.
	\end{equation}

	An element is included in $\mathcal M_P$ only if its outcome is measured directly or inferred from a specified product of raw gauge outcomes available under the selected schedule. It must also commute with the retained opposite-type constraints, and all required ancillas, couplers, gauge fragments, and measurement rounds must be available. When two locally valid ancilla-repurposing orientations exist, the measurement set excludes the orientation identified by the local repair rule as avoidably lossy when the alternative is available. This filter constrains the available measurements but is not a global distance certificate. We emphasize that the algebraic certification is conditional on an admissible underlying measurement schedule. Although the schedule metadata $\sigma_j$ records how each effective measurement is realized, the GF(2) row-space test does not independently validate the physical measurement schedule or its hardware-level fault-tolerance properties.

	The pre-merge same-type stabilizer-constraint matrix $H_P^{\mathrm{sep}}$ and the independently chosen target logical representative $\bm\ell$ are not elements of $\mathcal M_P$. They are supplied separately to the subsequent parity-certification problem. This distinction prevents the target from being defined tautologically from the candidate seam measurements and separates the physical measurement-set construction from the subsequent binary linear algebra.

	The nontrivial contribution is therefore the construction of $\mathcal M_P$: deciding which native, gauge-inferred, fused-boundary, and repurposed measurements are physically available and specifying how their outcomes are obtained from the raw record. Once $\mathcal M_P$ is fixed, the GF(2) test below is sound and complete only for products of measurements contained in this set, modulo the pre-merge constraints in $H_P^{\mathrm{sep}}$. An infeasible result does not exclude a reconstruction based on additional hardware primitives or an enlarged measurement set.

\subsection{Certified seam-parity synthesis}
	\label{sec:certified}
	
	Once the defect-adapted seam-measurement set has been constructed, the next task is to determine whether it realizes the requested logical parity. For a requested merge of Pauli type $P \in \{X,Z\}$, we construct candidate $P$-type seam measurements and retain only those that pass the local commutation test against the residual opposite-type effective constraints and the relevant admissibility rules. For an $X_L\otimes X_L$ merge this means validating candidate $X$-type seam measurements against the retained defect-adapted $Z$-type constraints; for a $Z_L\otimes Z_L$ merge the same logic applies with $X$ and $Z$ exchanged.
	
	We collect the admissible effective seam rows into a matrix $E_P$ and the same-type constraints inherited from the separated pre-merge configuration into a matrix $H_P^{\mathrm{sep}}$.
	Let $\bm{\ell}$ denote a representative of the desired joint logical parity on the merged active-qubit set.
	The requested parity is generated by the chosen measurement set and the pre-merge constraints if there exist binary selector vectors $\bm{\alpha}$ and $\bm{\gamma}$ such that
	\begin{equation}
		\bm{\alpha}^{\top} E_P \oplus \bm{\gamma}^{\top} H_P^{\mathrm{sep}} = \bm{\ell},
		\label{eq:certified_synthesis}
	\end{equation}
	with arithmetic over GF(2). This is a standard linear feasibility test over the binary field~\cite{MacWilliams1977}. The following theorem should therefore be read as a compiler-correctness statement for a fixed admissible seam-measurement set, not as a new row-space criterion for stabilizer codes. Here, a \emph{parity-extraction rule} means a product of the effective seam rows admitted by the chosen seam-measurement set, followed by the raw gauge-outcome post-processing specified by the corresponding inference map.
	
	\begin{theorem}[Relative completeness within a fixed admissible seam-measurement set]
		\label{thm:certified_realizability}
		Fix an admissible seam-measurement set $\mathcal M_P$ with effective-row matrix $E_P$, a pre-merge same-type stabilizer-constraint matrix $H_P^{\mathrm{sep}}$, and a valid target joint logical representative $\bm{\ell}$. Using products of rows in this fixed admissible seam-measurement set, the requested joint parity can be extracted if and only if there exist binary selector vectors $\bm{\alpha}$ and $\bm{\gamma}$ satisfying
		\begin{equation}
			\bm{\alpha}^{\top} E_P \oplus \bm{\gamma}^{\top} H_P^{\mathrm{sep}} = \bm{\ell}.
		\end{equation}
		When such a solution exists, the selected effective seam rows differ from $\bm{\ell}$ only by pre-merge same-type stabilizer or super-stabilizer constraints. The seam selector $\bm{\alpha}$ also induces a raw-outcome selector
		\begin{equation}
			\bm{\beta}=\widetilde O_P^{\top}\bm{\alpha},
		\end{equation}
		so that the extracted merge parity is
		\begin{equation}
			s_{\mathrm{parity}}=\bm{\beta}^{\top}\bm{s}_P^{[1:T]}.
		\end{equation}
		The sign of this joint-parity measurement is then recorded as the corresponding Pauli-frame information.
	\end{theorem}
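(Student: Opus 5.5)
The plan is to make precise what ``the requested joint parity can be extracted using products of rows in $\mathcal M_P$'' means, and then show that the theorem is essentially a restatement of GF(2) row-space membership. I would formalize extraction as follows. A subset of admissible measurements, encoded by a selector $\bm\alpha\in\mathrm{GF}(2)^m$, extracts the parity if the product of the corresponding effective $P$-type operators, $P(\bm\alpha^\top E_P)$, has the same eigenvalue as $P(\bm\ell)$ on every state of the pre-merge code space. Equivalently, the two operators agree up to an element of the same-type stabilizer group fixed by the separated patches.

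Because all operators involved are $P$-type Paulis, products correspond to XOR of binary supports, with no phase ambiguity beyond the recorded sign. Two $P$-type operators therefore have identical action on the pre-merge code space exactly when their support difference is a product of same-type pre-merge constraints, i.e.\ lies in the row space of $H_P^{\mathrm{sep}}$. This requires one standard fact: a $P$-type Pauli whose support difference from $\bm\ell$ is not in that row space acts differently on the code space. I would argue this by taking $\bm\ell$ to be a valid logical representative. A difference outside $\mathrm{rowspace}(H_P^{\mathrm{sep}})$ either anticommutes with some retained opposite-type constraint, which admissibility excludes for the seam rows, or it is a nontrivial logical of the separated code. In the latter case the product measures a different logical operator, not the requested parity.

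The main obstacle is this ``only if'' step. It needs the retained opposite-type constraints, together with validity of $\bm\ell$, to pin down the same-type stabilizer group as exactly $\mathrm{rowspace}(H_P^{\mathrm{sep}})$. I would treat this as part of the hypothesis that $H_P^{\mathrm{sep}}$ is the full pre-merge same-type constraint matrix. I would also state explicitly that completeness is relative to $\mathcal M_P$, in the sense of the remark in Section~\ref{sec:measurement_set}. With that, the equivalence reads: extraction exists iff $\bm\ell\oplus\bm\alpha^\top E_P\in\mathrm{rowspace}(H_P^{\mathrm{sep}})$ for some $\bm\alpha$, iff some $\bm\gamma$ satisfies Eq.~\eqref{eq:certified_synthesis}. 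The ``differ only by pre-merge constraints'' clause is then immediate from rearranging the equation.

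For the raw-outcome selector, each admissible element satisfies $m_j=\bm o_j^\top\bm s_P^{[1:T]}$ by the definition of $\mathcal M_P$: its outcome is the specified parity of schedule-tagged raw outcomes. Outcomes of commuting $P$-type operators multiply, so the binary outcomes add. Linearity gives $\sum_j\alpha_j m_j=\sum_j\alpha_j\bm o_j^\top\bm s_P^{[1:T]}=(\widetilde O_P^\top\bm\alpha)^\top\bm s_P^{[1:T]}=\bm\beta^\top\bm s_P^{[1:T]}$. By the previous step this equals the eigenvalue bit of $P(\bm\ell)$ plus the eigenvalue bits of the selected $H_P^{\mathrm{sep}}$ rows, which are deterministic (trivial or known from the pre-merge frame) on the code space. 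This gives $s_{\mathrm{parity}}$, and the resulting sign is absorbed into the Pauli frame.
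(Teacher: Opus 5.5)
Your argument is sound in outline, but it takes a different route from the paper. The paper defines a parity-extraction rule as a product of admitted seam rows followed by the inference-map post-processing. Its ``only if'' direction simply takes as given that any valid rule is such a product modulo same-type constraints fixed before the merge. Row-space membership is then immediate, and the only mathematical content is that Gaussian elimination decides it. You instead define extraction operationally: $P(\bm\alpha^\top E_P)$ acts as $\pm P(\bm\ell)$ on the pre-merge code space. You then have to show that this coincides with support equivalence modulo $\mathrm{rowspace}(H_P^{\mathrm{sep}})$. This gives the ``only if'' direction, and hence the dual witness, genuine operational meaning. It does need the hypothesis you flag and the paper leaves implicit: $H_P^{\mathrm{sep}}$ must generate the whole $P$-type part of the pre-merge stabilizer group. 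That includes any same-type gauge values an alternating or shell schedule has fixed at merge time. Otherwise a product differing from $\bm\ell$ by a known-valued gauge would extract the parity while the GF(2) system is infeasible. Your tracking of the known outcomes of the $\bm\gamma$-selected pre-merge rows in $s_{\mathrm{parity}}$ is also more explicit than the paper, which folds them into the Pauli frame.

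One step needs repair. Take $\bm v=\bm\ell\oplus\bm\alpha^\top E_P\notin\mathrm{rowspace}(H_P^{\mathrm{sep}})$. Your dichotomy says either $P(\bm v)$ anticommutes with a retained opposite-type constraint (excluded by admissibility) or it is a nontrivial logical of the separated code. This dichotomy is incomplete.
\begin{itemize}
\item Admissibility only guarantees commutation with the retained, merged-stage opposite-type constraints.
\item Whether $P(\bm v)$ acts as a scalar on the pre-merge code space is governed by the \emph{pre-merge} opposite-type stabilizers.
\item $P(\bm v)$ can commute with the former while anticommuting with the latter. In the textbook merge every individual seam row does exactly this: a weight-4 seam $X$-check overlaps a weight-2 pre-merge boundary $Z$-check in a single qubit. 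That is why individual seam outcomes are random.
\end{itemize}
Such a $\bm v$ is neither excluded by admissibility nor a logical of the separated code. Your conclusion survives, because then $P(\bm v)$ does not preserve the code space and cannot act as a scalar on it. The clean argument needs three ingredients:
\begin{itemize}
\item the standard fact that a Pauli acts as a scalar on a stabilizer code space iff it lies in the stabilizer group up to phase;
\item the CSS structure, so that a $P$-type element of that group is a product of $P$-type generators;
\item completeness of $H_P^{\mathrm{sep}}$.
\end{itemize}
Admissibility plays no role in this direction, and the retained opposite-type constraints do not ``pin down'' the same-type stabilizer group.
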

	
	\begin{proof}
		The ``if'' direction follows directly from Eq.~\eqref{eq:certified_synthesis}. The selected seam rows have binary support $\bm{\alpha}^{\top}E_P$, and the selected pre-merge constraints have support $\bm{\gamma}^{\top}H_P^{\mathrm{sep}}$. Since the rows of $H_P^{\mathrm{sep}}$ are already fixed in the separated code space, multiplying by them changes only the representative of the same logical parity class, not the logical parity being measured. The admissibility of every selected row of $E_P$ ensures compatibility with the retained opposite-type effective constraints. Mapping the selected effective seam rows through the inference matrix $\widetilde O_P$ gives the raw gauge selector $\bm{\beta}$ and therefore the parity-extraction rule.
		
		For the ``only if'' direction, fix the same admissible seam-measurement set. By assumption, any valid parity-extraction rule using this set is constructed from products of the available effective seam rows, modulo same-type constraints already fixed before the merge. Its binary support must therefore lie in the row space generated by $E_P$ and $H_P^{\mathrm{sep}}$, which is exactly Eq.~\eqref{eq:certified_synthesis}. Gaussian elimination over GF(2) is complete for this row-space membership test within the fixed measurement set.
		
		The theorem does not assert that infeasibility of Eq.~\eqref{eq:certified_synthesis} rules out every conceivable hardware-level reconstruction. It asserts that the requested parity is not generated by the admissible seam-measurement set supplied to the compiler. Enlarging the measurement set by adding additional measurement primitives, alternative repurposing choices, or longer-range fused checks may change the row space and can therefore change the feasibility result.
	\end{proof}
	
	\begin{proposition}[Dual infeasibility certificate]
		\label{prop:dual_failure}
		Let
		\begin{equation}
			M_P=
			\begin{bmatrix}
				E_P\\
				H_P^{\mathrm{sep}}
			\end{bmatrix}.
		\end{equation}
		If Eq.~\eqref{eq:certified_synthesis} has no solution, then there exists a binary vector $\bm w$ such that
		\begin{equation}
			M_P\bm w^{\top}=\bm 0,
			\qquad
			\bm\ell\bm w^{\top}=1.
			\label{eq:dual_failure_witness}
		\end{equation}
		Conversely, the existence of such a vector proves that $\bm\ell$ is not in the row space of $M_P$.
	\end{proposition}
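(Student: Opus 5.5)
This is the finite-dimensional duality between a row space over $\mathrm{GF}(2)$ and its orthogonal complement. Work in $V=\mathrm{GF}(2)^n$, where $n$ is the number of active data qubits in the merged patch, with the standard bilinear form $\bm u\cdot\bm v=\bm u\bm v^{\top}$. Let $R=\mathrm{row}(M_P)\subseteq V$. By construction of $M_P$, Eq.~\eqref{eq:certified_synthesis} is solvable exactly when $\bm\ell\in R$. The solution vectors are $\bm\alpha$ and $\bm\gamma$ stacked, and they serve as the coefficient vector for the rows of $M_P$. So both directions of the proposition concern membership of $\bm\ell$ in $R$. Note that $\{\bm w: M_P\bm w^{\top}=\bm 0\}$ is precisely the annihilator $R^{\perp}$.

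The converse direction is the easy one and I would do it first. Suppose $\bm w\in R^{\perp}$ and $\bm\ell\bm w^{\top}=1$, and suppose also that $\bm\ell=\bm c^{\top}M_P$ for some $\bm c$. Then
\[
\bm\ell\bm w^{\top}=\bm c^{\top}(M_P\bm w^{\top})=\bm c^{\top}\bm 0=0,
\]
which is a contradiction. Hence $\bm\ell\notin R$. This uses only bilinearity.

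For the forward direction, the key fact is $(R^{\perp})^{\perp}=R$ over $\mathrm{GF}(2)$. The form is nondegenerate on $V$, even though it is not anisotropic, so $\dim R^{\perp}=n-\dim R$ by rank--nullity: $R^{\perp}$ is the kernel of the linear map $\bm w\mapsto M_P\bm w^{\top}$, whose rank is $\dim R$. Applying the same count to $R^{\perp}$ gives $\dim (R^{\perp})^{\perp}=\dim R$. The inclusion $R\subseteq(R^{\perp})^{\perp}$ is immediate, so equality follows. If $\bm\ell\notin R$, then $\bm\ell\notin (R^{\perp})^{\perp}$, so some $\bm w\in R^{\perp}$ has $\bm\ell\bm w^{\top}\neq 0$, that is, $\bm\ell\bm w^{\top}=1$.

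I would add a constructive remark, since the paper wants an explicit witness. This follows the statement of Theorem~\ref{thm:certified_realizability} that Gaussian elimination decides membership. Row-reduce the augmented system $[M_P^{\top}\mid\bm\ell^{\top}]$. Inconsistency produces a row combination $\bm y$ with $\bm y M_P^{\top}=\bm 0$ and $\bm y\bm\ell^{\top}=1$, and taking $\bm w=\bm y$ gives Eq.~\eqref{eq:dual_failure_witness} directly. The only point that needs care is the main obstacle: over $\mathrm{GF}(2)$ one cannot argue by orthogonal projection, because self-orthogonal vectors exist. The argument must therefore rest only on the dimension count or on elimination, never on positivity of the form. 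With that, the proof is complete.
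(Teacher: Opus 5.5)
Your proof is correct and takes essentially the same route as the paper: both identify $\ker M_P$ with the annihilator of the row space of $M_P$, obtain the witness from $(R^{\perp})^{\perp}=R$, and get the converse from bilinearity. The only difference is that you supply the rank--nullity argument behind the double-annihilator identity, plus the elimination-based construction of $\bm w$, both of which the paper's one-line ``orthogonal complements'' step leaves implicit.
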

	
	\begin{proof}
		The row space of $M_P$ and the null space of $M_P$ are orthogonal complements over GF(2). If $\bm\ell$ is not in the row space of $M_P$, separation by the orthogonal complement gives a vector $\bm w\in\ker M_P$ with $\bm\ell\bm w^\top=1$. Conversely, every vector in the row space of $M_P$ has zero inner product with every vector in $\ker M_P$, so Eq.~\eqref{eq:dual_failure_witness} excludes $\bm\ell$ from that row space.
	\end{proof}
	
	The vector $\bm w$ is a compact compiler certificate: every included seam row or pre-merge constraint row has even overlap with $\bm w$, while the requested target has odd overlap. As with Theorem~\ref{thm:certified_realizability}, this certificate is relative to the supplied measurement set and does not rule out a reconstruction obtained after adding new hardware primitives or additional measurements.
	
	Figure~\ref{fig:data_defect_merge} and Fig.~\ref{fig:ancilla_defect_merge} illustrate the two representative uses of this formulation.
	In Fig.~\ref{fig:data_defect_merge}, the native seam picture is support-changing, so $E_P$ must be built from the boundary-deformed seam-measurement set rather than from the defect-free seam strip.
	In Fig.~\ref{fig:ancilla_defect_merge}, the seam support is preserved but one seam row is replaced by an inferred effective row reconstructed from repurposed local gauges.
	In both cases, the certification step answers the same question: whether the available defect-adapted seam-measurement set still generates the requested logical parity modulo the pre-merge constraints.
	The certification and execution steps are summarized in Algorithm~\ref{alg:certified_seam_parity}; a coordinate-level worked example is given in Appendix~\ref{app:a47a57b41_example}.
	
	\subsection{Parity extraction and infeasibility certificates}
	\label{sec:execution}
	
	The output of the synthesis step is not merely the statement that a desired logical parity is realizable.
	What is ultimately needed for lattice surgery is an explicit parity-extraction rule specifying which measured bits must be combined to recover the merge outcome and the associated Pauli-frame update.
	
	Let $\bm{s}_P^{[1:T]}$ denote the measured gauge-outcome stream over the relevant schedule window, and let $\widetilde{O}_P$ denote the linear inference map from those raw outcomes to the effective seam-check outcomes.
	Once a valid seam selector $\bm{\alpha}$ has been found, the corresponding gauge-level selector is
	\begin{equation}
		\bm{\beta} = \widetilde{O}_P^{\top}\bm{\alpha},
	\end{equation}
	and the logical merge outcome is extracted as
	\begin{equation}
		s_{\mathrm{parity}} = \bm{\beta}^{\top}\bm{s}_P^{[1:T]}.
		\label{eq:executable_parity}
	\end{equation}
    
		In the boundary data-qubit defect case of Fig.~\ref{fig:data_defect_merge}, $\bm{\alpha}$ selects seam operators supported on the boundary-deformed patch.
	In the seam-check ancilla defect case of Fig.~\ref{fig:ancilla_defect_merge}, one seam row is implemented indirectly through repurposed local gauges, so $\bm{\beta}$ selects those raw gauge outcomes rather than a single native seam bit.
	This is also where the local orientation filter enters: the orientation identified by the local repair analysis as lossy is excluded when the alternative reconstruction is available. The filter constrains the measurement set but does not by itself determine the shortest logical operator of the complete compiled operation.
	\begin{algorithmblock}[alg:certified_seam_parity]{Certified seam-parity synthesis and raw-outcome extraction}
    
		\begin{algorithmic}[1]
			\Require Merge type $P$, candidate seam measurements, retained opposite-type constraints, pre-merge constraints $H_P^{\mathrm{sep}}$, target $\bm{\ell}$, and raw-outcome data
			\Ensure On success, selectors $\bm{\alpha}$ and $\bm{\beta}$ and the extracted parity; on failure, a dual witness $\bm w$
			\State Construct the admissible seam-measurement set $\mathcal M_P$, and assemble its effective-row matrix $E_P$ and inference matrix $\widetilde O_P$, retaining only measurements that are available under the schedule, commute with the opposite-type constraints, and pass the local orientation filter.
			\State Solve
			\[
			\bm{\alpha}^{\top} E_P \oplus \bm{\gamma}^{\top} H_P^{\mathrm{sep}} = \bm{\ell}
			\]
			over GF(2).
			\If{no solution exists}
			\State Find $\bm w\in\ker M_P$ such that $\bm\ell\bm w^\top=1$.
			\State \Return infeasible and the dual witness $\bm w$.
			\EndIf
			\State Extract the seam selector $\bm{\alpha}$ from the solution.
			\State Compute the gauge-level selector
			\[
			\bm{\beta} \leftarrow \widetilde{O}_P^{\top}\bm{\alpha}.
			\]
			\State Construct the parity-extraction rule
			\[
			s_{\mathrm{parity}} = \bm{\beta}^{\top}\bm{s}_P^{[1:T]}.
			\]
			\State \Return success, $\bm{\alpha}$, $\bm{\beta}$, and $s_{\mathrm{parity}}$
		\end{algorithmic}
	\end{algorithmblock}

	If Eq.~\eqref{eq:certified_synthesis} has no solution, the requested logical primitive is unavailable under the given defect-adapted seam-measurement set and pre-merge constraints.
	This should be regarded as non-realizability within the chosen seam-measurement set rather than as a patch-construction failure.
	Accordingly, we distinguish throughout this work among patch viability, parity-synthesis success, and downstream circuit-level execution.
	The resulting compiler procedure is summarized in Algorithm~\ref{alg:certified_seam_parity}.

	\subsection{Worked success and failure certificates}
	\label{sec:worked_output}

	Appendix~\ref{app:a47a57b41_example} now reports a compiler-generated certificate for the distance-$7$ three-defect boundary cluster rather than a target chosen from the product of the selected seam measurements. The union active-qubit basis has 95 coordinates, the pre-merge constraint matrix is nonempty with
	\begin{equation}
		H_X^{\mathrm{sep}}\in \mathrm{GF}(2)^{46\times95},
	\end{equation}
	and the target $\bm\ell\in\mathrm{GF}(2)^{95}$ is constructed from the deformed separated logical representatives before the seam-measurement set is assembled. Neither the seam rows alone nor the separated constraints alone generate the target:
	\begin{equation}
		\operatorname{rank}(E_X)=3<4=
		\operatorname{rank}
		\begin{bmatrix}E_X\\ \bm\ell\end{bmatrix},
	\end{equation}
	and
	\begin{equation}
		\operatorname{rank}(H_X^{\mathrm{sep}})=46<47=
		\operatorname{rank}
		\begin{bmatrix}H_X^{\mathrm{sep}}\\ \bm\ell\end{bmatrix}.
	\end{equation}
	The combined system is feasible with $\bm\alpha=(1,1,1)^\top$ and a weight-$10$ pre-merge constraint selector $\bm\gamma$, so the example genuinely uses both terms in Eq.~\eqref{eq:certified_synthesis}.

	The paired failure instance keeps the same active basis, the same $H_X^{\mathrm{sep}}$, and the same target $\bm\ell$, but removes one native seam-measurement row and all available reconstructions of that row. The rank then changes from
	\begin{equation}
		\operatorname{rank}(M_X^{\mathrm{fail}})=48
		\quad\text{to}\quad
		\operatorname{rank}
		\begin{bmatrix}M_X^{\mathrm{fail}}\\ \bm\ell\end{bmatrix}=49,
	\end{equation}
	and Sec.~\ref{app:certified_failure} provides an explicit dual witness. Thus the same independently fixed logical target changes from realizable to non-realizable within the evaluated measurement set solely because the available measurement set is reduced.

	\section{Compiler-level evaluation}
	\label{sec:simulations}

	The revised numerical study separates the compiler-level quantities directly evaluated in the main text from an archived geometry heuristic retained only for context. The main-text results report a controlled internal measurement-set ablation, upstream configuration generation, parity certification, and a stagewise detector-graph diagnostic. Appendix~\ref{app:auxiliary_diagnostics} retains a heuristic seam-geometry score as a descriptive compiler heuristic; it is not interpreted as a shortest-logical, detector-error-model, or end-to-end merge-circuit distance. The earlier memory-sensitivity plot is retained in the repository as archived exploratory material but is excluded from the manuscript because only aggregate results were preserved, preventing the reconstruction of valid confidence intervals in the low-failure-count regime.

	\begin{figure*}[t]
		\centering
		\includegraphics[width=0.90\textwidth]{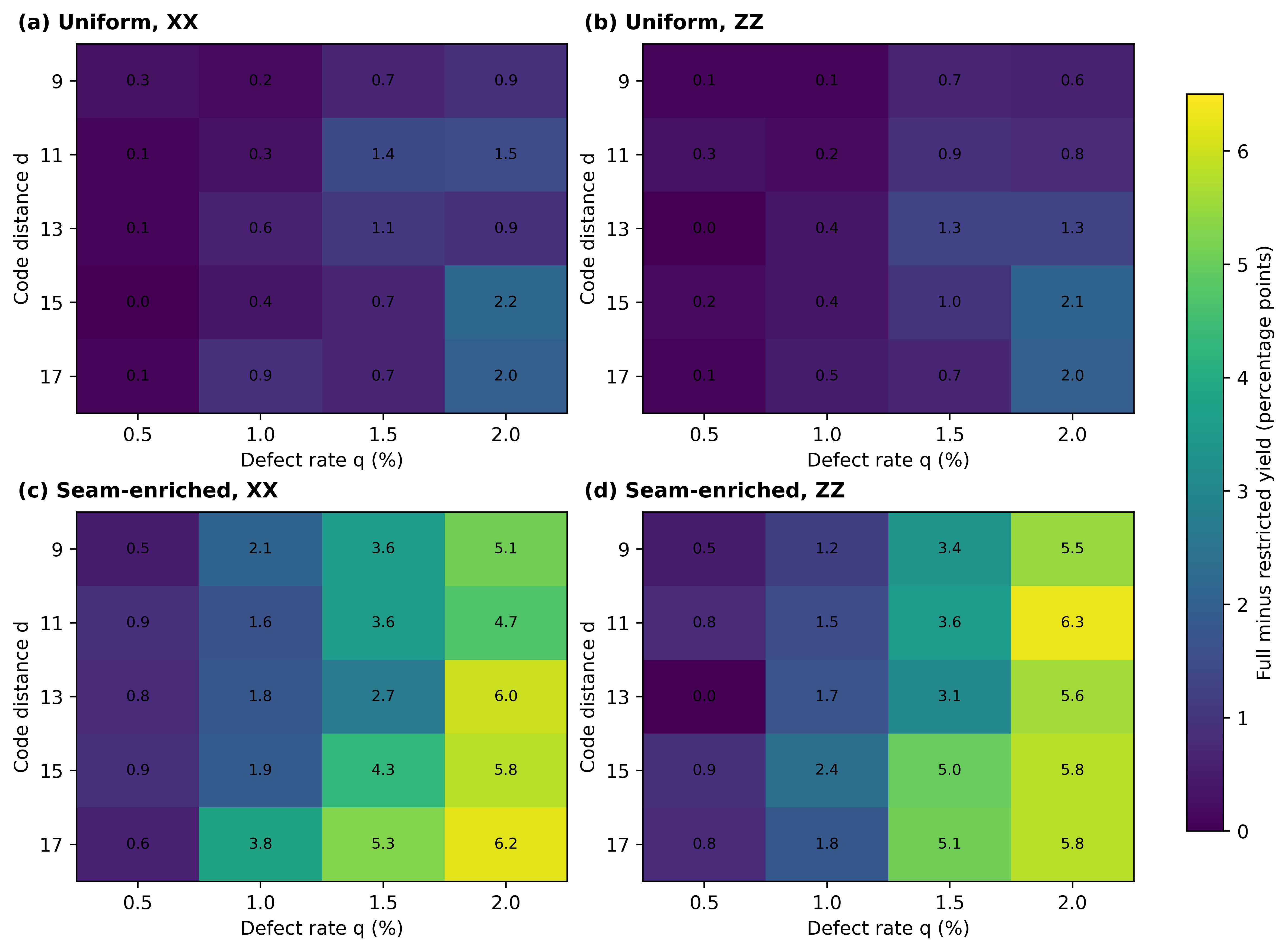}
		\caption{Controlled paired ablation of the admissible seam-measurement set. Each cell reports the full-set certification yield minus the restricted-set yield, in percentage points, on the same 1000 sampled maps. Panels (a) and (b) use the uniform Bernoulli ensemble; panels (c) and (d) use the seam-enriched ensemble. The restricted measurement set retains the same patch-admission and GF(2) certification layers but disables fused-boundary reconstruction and the local orientation filter. The figure is therefore a component ablation and not a comparison with a published method.}
		\label{fig:internal_ablation}
	\end{figure*}
	\subsection{Defect ensembles and statistical procedure}

	The paired internal ablation uses $d\in\{9,11,13,15,17\}$, $q\in\{0.5,1.0,1.5,2.0\}\%$, and 1000 maps per $(d,q,\text{ensemble},\text{merge type})$ point. In the \emph{uniform} ensemble, every data-qubit location in both patches and every seam-check-ancilla location is sampled independently with Bernoulli probability $q$. In the \emph{seam-enriched} ensemble, the bulk remains Bernoulli-$q$, while the two facing boundary columns, the transposed boundary rows used by the paired ZZ construction, and the seam-ancilla line are sampled at $q_{\rm seam}=\min(1,5q)$. The same physical map is evaluated using full and restricted measurement sets. Every binomial yield is accompanied by a Wilson 95\% confidence interval, and the paired difference is evaluated by bootstrap confidence intervals together with the both-success, full-only, restricted-only, and neither-success counts.

	The published-method integration uses a separate clustered data-qubit ensemble with seam-ancilla defects disabled. For each distance $d\in\{9,11,13,15,17\}$ and cluster parameter $\lambda\in\{0.5,1,2\}$, we sample 1000 independent maps. Bulk data defects are Bernoulli with probability $q_{\rm bulk}=0.002$. The number of boundary clusters is Poisson distributed with mean $\lambda$ (conditioned to be at least one for $\lambda>0$); each cluster has a uniformly sampled length from one to four boundary sites and is placed on the facing boundary of patch A, patch B, or both aligned boundaries, with aligned probability $0.2$. The identical physical defect map is evaluated for XX and ZZ merges. Each returned configuration is then converted to a union active-qubit basis over the merged and separated stages, and every extractable configuration is tested before the best certified configuration is selected. Missing support causes explicit rejection rather than silent truncation.

	\subsection{Controlled component-level ablation}

	Each internal instance is evaluated under two measurement sets designed to isolate the components introduced in this work. The \emph{restricted measurement set} uses the same patch-admission rules and the same GF(2) certification layer as the full method, but omits fused boundary reconstruction and the local orientation filter. The \emph{full measurement set} adds these two measurement-set construction rules. Table~\ref{tab:method_components} summarizes this controlled ablation; it is not a published competing method.

	\begin{table}[t]
		\centering
		\caption{Controlled internal ablation. Both variants use the same patch-admission and GF(2) certification layers.}
		\label{tab:method_components}
		\renewcommand{\arraystretch}{1.08}
		\begin{ruledtabular}
			\begin{tabular}{lcccc}
				Method & Patch & GF(2) & Fused & Orient. \\
				\hline
				Restricted measurement set & $\checkmark$ & $\checkmark$ & -- & -- \\
				Full measurement set & $\checkmark$ & $\checkmark$ & $\checkmark$ & $\checkmark$ \\
			\end{tabular}
		\end{ruledtabular}
	\end{table}

	Across the archived uniform scans, the paired gain of the full measurement set ranges from 0 to 2.2 percentage points for XX and from 0 to 2.1 percentage points for ZZ. The largest uniform gains occur at $d=15$ and $q=2\%$: $2.2$ percentage points for XX (paired-bootstrap 95\% CI $[1.4,3.1]$) and $2.1$ percentage points for ZZ (95\% CI $[1.3,3.1]$). In the seam-enriched scans, the maximum gains are $6.2$ percentage points for XX at $(d,q)=(17,2\%)$ (95\% CI $[4.7,7.8]$) and $6.3$ percentage points for ZZ at $(d,q)=(11,2\%)$ (95\% CI $[4.8,7.9]$). No archived map is certified by the restricted measurement set and rejected by the full measurement set. These values quantify the marginal contribution of the added measurement-set construction rules within the same compiler; they are not interpreted as a comparison against SnL, LUCI, or another published framework. Full pointwise Wilson and paired-bootstrap intervals are included in the released machine-readable tables.

	Figure~\ref{fig:internal_ablation} shows that the marginal gain is generally small in the uniform ensemble and becomes larger in the seam-enriched ensemble that directly exercises the proposed boundary and seam mechanisms. The largest paired gains are 6.2 percentage points for XX and 6.3 percentage points for ZZ, consistent with the interval estimates reported above.

	\begin{figure*}[t]
        \centering
        \includegraphics[width=0.96\textwidth]{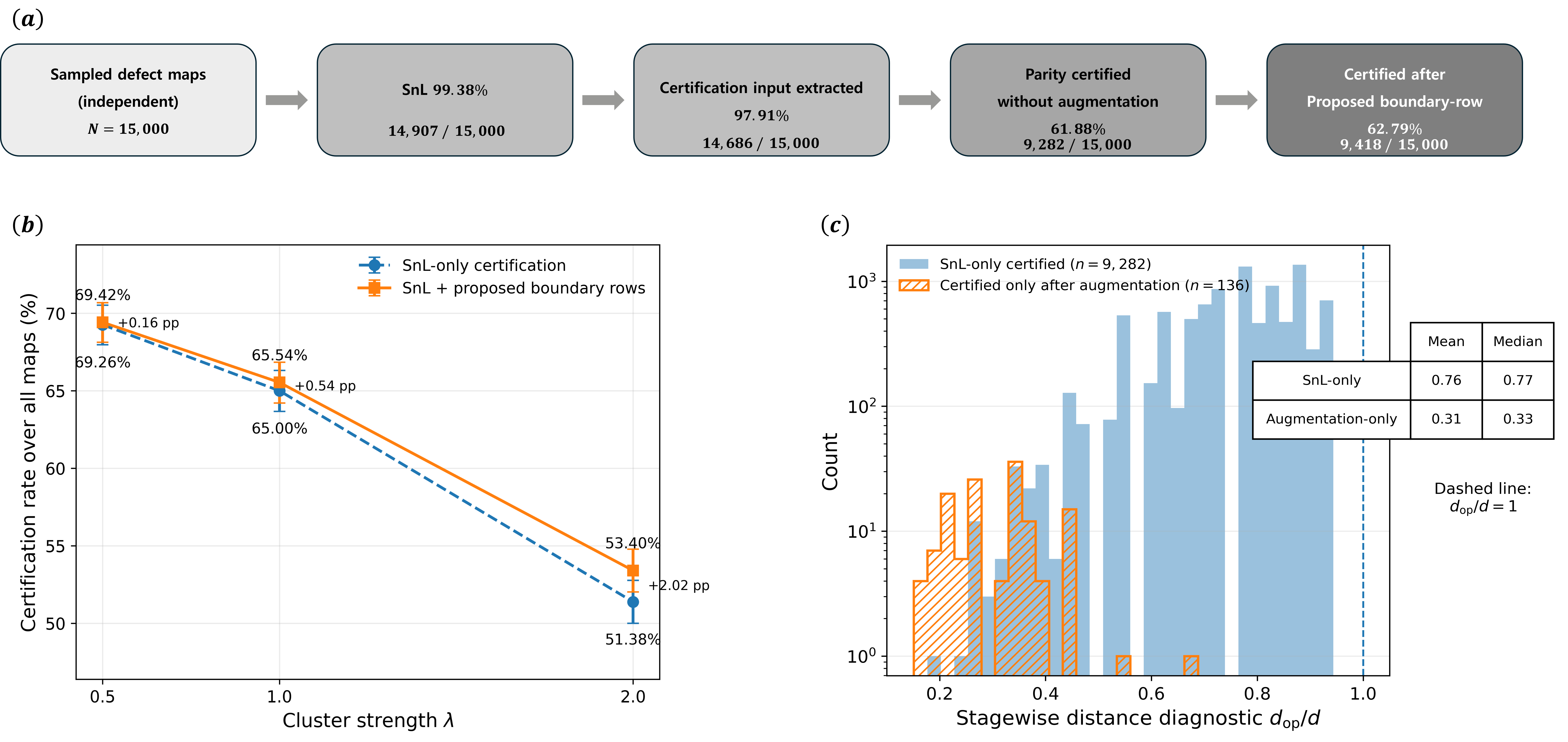}
        \caption{Certification of SnL configurations under clustered boundary defects. (a) Funnel from 15,000 independently sampled defect maps to adapted configuration output, GF(2)-input extraction, and certified joint parities. Percentages are unconditional over the sampled maps. (b) All-map certification rate versus boundary-cluster strength $\lambda$ before and after adding the proposed boundary-reconstruction rows; error bars are Wilson 95\% confidence intervals. The paired gains are 0.16, 0.54, and 2.02 percentage points for $\lambda=0.5,1,$ and $2$, respectively. (c) Distribution of the normalized stagewise detector-graph distance $d_{\rm op}/d$ for instances certified from the SnL configurations alone and instances certified only after adding the proposed rows. We define $d_{\rm op}$ as the minimum of the horizontal and vertical detector-graph distances over the merged configuration and the two separated subpatches. It is not an end-to-end merge-circuit fault distance. XX is shown; the paired XX and ZZ certification outcomes agree on 99.97\% of maps.}
        \label{fig:snl_integration}
    \end{figure*}

	\subsection{Certification of SnL configurations}
	\label{sec:snl_audit}

	To address the requested direct comparison with Leroux \emph{et al.}, we run the open-source SnL implementation on the 15,000 clustered boundary-defect maps described above and extract the adapted configurations returned by the implementation. For reproducibility, all SnL configurations used in this study were generated using the archived upstream SnL implementation at Git commit
\texttt{63ad64198c56a7d7d0021530707918c597671a0f}.
The same revision identifier is recorded in the released reproducibility package.
 For each configuration, we form $E_P$, $H_P^{\rm sep}$, and the target representative $\bm\ell$ from the adapted merged and separated patches and apply the same GF(2) certification test. We first apply the certification test using only the measurement structure extracted from that configuration and then repeat the test after adding the proposed boundary-reconstruction rows. This compares the two constructions using the same extracted measurement structures; it is not an end-to-end logical-error-rate comparison between complete merge circuits.

	As summarized in Fig.~\ref{fig:snl_integration}(a), the implementation returns at least one adapted configuration for 14,907 of 15,000 maps (99.38\%), and the certification inputs are extracted successfully for 14,686 maps (97.91\%). Using only the extracted measurement structures, the requested parity is certified in 9,282 maps (61.88\%). Adding the proposed boundary-reconstruction rows increases the certified set to 9,418 maps (62.79\%). The 136 additional certificates correspond to an overall paired increase of 0.91 percentage points, and no previously certified map is invalidated by adding the proposed rows.

	Panel~\ref{fig:snl_integration}(b) shows the unconditional certification rates aggregated over the five distances at each cluster strength. For $\lambda=0.5$, $1$, and $2$, the calculation without the proposed rows certifies 69.26\%, 65.00\%, and 51.38\% of maps, respectively, while adding the proposed boundary-reconstruction rows increases these values to 69.42\%, 65.54\%, and 53.40\%. The paired gains are therefore 0.16, 0.54, and 2.02 percentage points. We interpret this increase as a small, topology-dependent extension of certification coverage rather than a uniform improvement over the upstream method.

	For the stagewise detector-graph distance in Fig.~\ref{fig:snl_integration}(c), let $(d_h^{M},d_v^{M})$, $(d_h^{A},d_v^{A})$, and $(d_h^{B},d_v^{B})$ denote the horizontal and vertical detector-graph distances returned for the merged configuration and the two separated subpatches. We report
	\begin{equation}
		d_{\rm op}=\min\{d_h^{M},d_v^{M},d_h^{A},d_v^{A},d_h^{B},d_v^{B}\}.
	\end{equation}
	The set certified without the proposed rows has mean $d_{\rm op}/d=0.76$ and median $0.77$, whereas the set of instances certified only after adding the proposed rows has mean $0.31$ and median $0.33$, matching the production data used to regenerate panel~(c). Certification recovery therefore establishes algebraic operation availability within the enlarged measurement set but does not demonstrate restoration of the nominal fault distance. These results show that the methods are complementary at the certification stage: the upstream method supplies adapted configurations, while the proposed layer certifies the requested parity and adds a small number of physically valid measurements through boundary-reconstruction rows in specific clustered-defect topologies.

	\subsection{XX/ZZ scope and uncertainty}

	The same sampled physical maps are used for XX and ZZ. Under coordinate transposition, data and measurement locations are mapped bijectively while the checkerboard $X/Z$ assignment and the vertical-logical convention are exchanged. Consequently, the XX seam rows, separated $X$-type constraints, and target representative are mapped to the corresponding ZZ objects by simultaneous coordinate and row permutations of the GF(2) system. Finite-patch boundary conventions can nevertheless break exact numerical equality. The paired certification outcomes agree on 99.97\% of the 15,000 maps; the remaining five cases are retained in the released instance-level data rather than discarded. We therefore report XX in Fig.~\ref{fig:snl_integration} and treat ZZ as a paired consistency check, not as an assumed exact symmetry.

	All displayed yields use Wilson 95\% confidence intervals. Paired gains in the internal ablation are additionally accompanied by paired-bootstrap intervals. The stagewise-distance panel reports the full empirical distributions rather than only a mean. These uncertainty and distributional reports replace three-significant-figure point estimates without sampling information.

	\subsection{Scope of the numerical claims}

	The main-text evaluation does not claim a shortest-logical distance or a circuit-level logical-error rate for the proposed compiled operation, and it does not form a combined operational-success metric from quantities defined on different circuit contracts. The published-method study compares parity certification before and after the proposed rows are added to the same extracted measurement structures. Appendix~\ref{app:auxiliary_diagnostics} retains only the earlier heuristic geometry score as an explicitly labeled descriptive score. The GF(2) solve determines whether the requested target is generated by the evaluated rows, while the dual witness certifies non-membership in infeasible cases.

\section{Conclusion}
	\label{sec:conclusion}

	Defect-adaptive patch construction and defect-adaptive logical-operation certification are related but distinct tasks. Defect-adaptation and detecting-region frameworks construct irregular patches, measurement schedules, and detector structures on defective hardware. The present work addresses the downstream question of whether the rows supplied by such a construction generate a requested lattice-surgery joint parity and, if so, which raw outcomes must be combined to obtain it.

	The principal contribution is the admissible seam-measurement set, whose elements combine effective seam supports with raw-outcome selectors and schedule metadata. Once that set is fixed, standard GF(2) linear algebra tests whether it generates the target parity: a successful solve returns a raw-outcome selector, while an infeasible solve returns a dual witness proving that the target is absent from the row space generated by the evaluated measurement set. The conclusion is intentionally limited to the evaluated measurement set and does not exclude a reconstruction based on additional hardware primitives or a larger measurement set.

	The controlled internal ablation shows that fused-boundary and local orientation-filtering components increase requested-parity availability most clearly in seam-enriched ensembles. Applying the certification test to configurations generated by SnL provides the requested published-method comparison using the same extracted measurement structures. Adding the proposed boundary-reconstruction rows yields a small but statistically resolved increase in the certification rate under strongly clustered boundary defects, while the instances certified only after adding the proposed rows generally have reduced stagewise detector-graph distances. We therefore interpret the methods as complementary rather than as globally ranked competitors.

	The revised scope does not include an end-to-end shortest-logical-distance calculation or a circuit-level logical-error-rate estimate for the complete proposed merge. The auxiliary Appendix reports only a heuristic geometry score, which is not used as evidence for a physical-performance advantage. Constructing a time-dependent detector model for each irregular gauge schedule and evaluating its decoded logical failure rate remain important future work. Within its stated scope, the present certification layer converts irregular defect-adapted measurement resources into explicit lattice-surgery parity rules.

\begin{acknowledgments}
This work was supported by the Institute for Information \& Communications Technology Planning \& Evaluation (IITP) grant funded by the Korea government (MSIT) [No.~2020-0-00014, A Technology Development of Quantum OS for Fault-tolerant Logical Qubit Computing Environment]. This work was also supported by the Institute for Information \& Communications Technology Planning \& Evaluation (IITP)-ITRC (Information Technology Research Center) grant funded by the Korea government (Ministry of Science and ICT) (IITP-2026-RS-2021-II211810). This research was supported by Quantum Science and Technology Flagship Project(Quantum Computing)(No. RS-2025-25463854) through the National Research Foundation of Korea(NRF) funded by the Korean government (Ministry of Science and ICT(MSIT)).

\end{acknowledgments}
	
	\section*{Data and code availability}
    The data, code, and scripts used to reproduce the numerical results, regenerate the figures, and validate the compiler certificates are publicly available at \url{https://github.com/gsminsss/defect-adaptive-lattice-surgery}.

\appendix

\section{Auxiliary heuristic geometry score}
\label{app:auxiliary_diagnostics}

The main-text claims are based on certification yields and paired comparisons of the extracted measurement structures. For completeness, this Appendix retains one descriptive score from the archived simulation study. It characterizes a compiler heuristic under a simplified geometry contract and is not used as an end-to-end physical-performance claim.

\subsection{Heuristic seam-geometry score}

For each successfully compiled instance, the archived code assigned the same-type geometry score
\begin{equation}
 d_{\rm geo}=\max\!\left[1, d-\Delta_{\rm boundary}-\Delta_{\rm opp}-\Delta_{\rm orient}\right].
\end{equation}
Here $\Delta_{\rm boundary}$ is the rule-based penalty assigned to broken boundary rows, $\Delta_{\rm opp}$ is the penalty from opposite-type fallback rows, and $\Delta_{\rm orient}$ is the local penalty assigned when the locally preferred ancilla-repurposing orientation is unavailable. These integer penalties are compiler heuristics. Consequently, $d_{\rm geo}$ is neither a shortest logical operator, a detector-error-model distance, nor a circuit fault distance.

\begin{figure*}[t]
    \centering
    \includegraphics[width=0.96\textwidth]{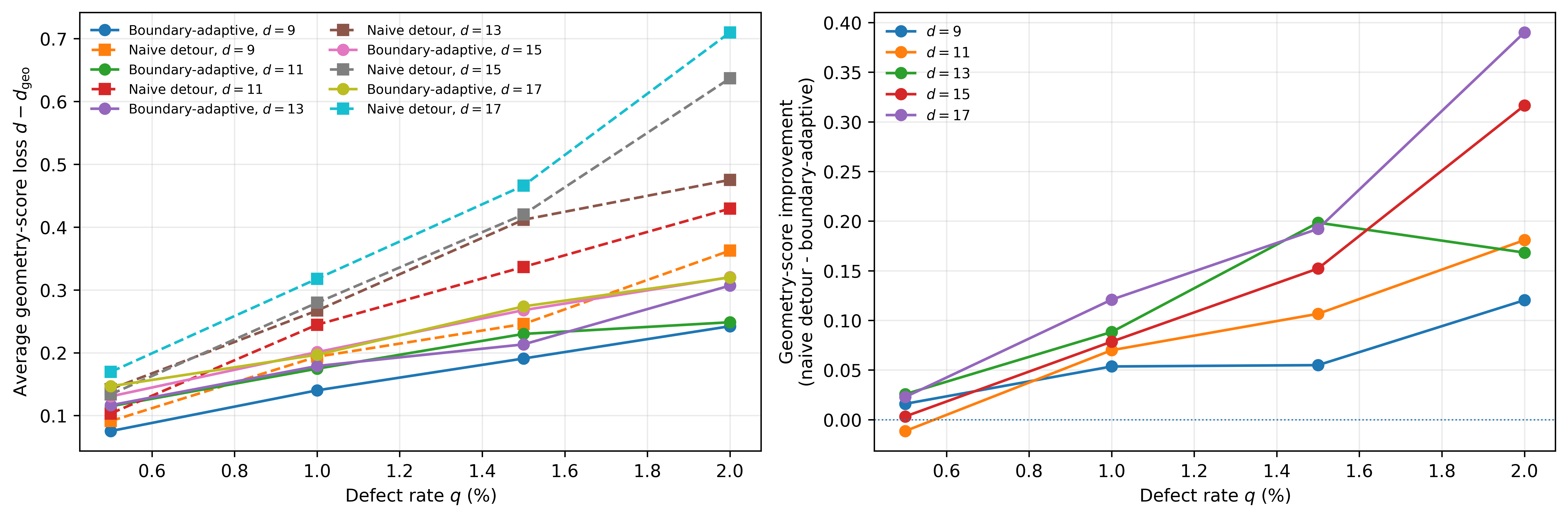}
    \caption{Archived heuristic seam-geometry score for the XX construction. The left panel compares the boundary-adaptive and naive boundary-detour heuristics through the average score loss $d-d_{\rm geo}$. The right panel reports the corresponding score improvement of the boundary-adaptive heuristic relative to the naive detour. This archived comparison is distinct from the full-versus-restricted certification ablation in Fig.~\ref{fig:internal_ablation}. The plotted score is not a calculation of the physical distance of an explicit irregular merge circuit. Because only aggregate means were retained for this archived scan, the figure is descriptive and is not used for a statistical-significance claim.}
    \label{fig:appendix_geometry_proxy}
\end{figure*}

Figure~\ref{fig:appendix_geometry_proxy} complements the exact
certification study only at the compiler-heuristic level. In particular,
it should not be combined with the stagewise detector-graph distance
$d_{\rm op}$ in Fig.~\ref{fig:snl_integration}(c), because the two
diagnostics are defined on different objects. The earlier proxy-based
memory-sensitivity output and its aggregate results are preserved under
\texttt{archive/legacy\_diagnostics/} in the repository but are not
included in the analysis because valid confidence intervals cannot be
reconstructed from the retained low-failure-count aggregates.

\section{Compiler-generated paired certificate for a three-defect boundary cluster}
\label{app:a47a57b41_example}

This appendix reports the actual binary certificate exported by the compiler for the distance-$7$ structured pattern denoted $A47/A57/B41$. In the implementation coordinates, the unavailable data qubits are
\begin{equation}
D=\{(13,7),(13,9),(15,7)\}.
\end{equation}
The compiler forms a union active-qubit basis
$\mathcal Q_{\rm union}=(q_1,\ldots,q_{95})$ covering the merged and separated stages. The complete coordinate ordering, matrices, and measurement ledger are included in the machine-readable file \texttt{certificate\_matrices.npz} and its accompanying metadata. The target $\bm\ell$ is generated from the deformed separated logical representatives before seam measurements are selected.

\subsection{Included seam rows and independently generated target}

For the successful XX certificate, the included seam matrix has three rows,
\begin{equation}
E_X=
\begin{bmatrix}
\bm r_1\\
\bm r_2\\
\bm r_3
\end{bmatrix}
\in\mathrm{GF}(2)^{3\times95}.
\end{equation}
Their physical supports are
\begin{align}
\operatorname{supp}(\bm r_1)
&=\{(13,1),(13,3),(15,1),(15,3)\},\\
\operatorname{supp}(\bm r_2)
&=\{(13,13),(15,13)\},\\
\operatorname{supp}(\bm r_3)
&=\{(11,7),(11,9),(13,5),(13,11),\nonumber\\
&\quad(15,1),(15,3),(15,13),(17,1),(17,3),\nonumber\\
&\quad(17,5),(17,7),(17,9),(17,11),(17,13)\}.
\end{align}
The first two rows are native surviving seam rows. The third is the fused boundary row spanning the two damaged seam windows.

The independently generated target has support
\begin{align}
\operatorname{supp}(\bm\ell)
=\{&(1,1),(1,3),(1,5),(1,7),(1,9),\nonumber\\
&(1,11),(1,13),(15,1),(15,3),(15,9),\nonumber\\
&(15,11),(15,13),(17,5),(17,7)\}.
\label{eq:full_target_support}
\end{align}
The pre-merge constraint matrix contains 46 rows,
\begin{equation}
H_X^{\rm sep}\in\mathrm{GF}(2)^{46\times95}.
\end{equation}
Crucially, the target is generated by neither input block alone:
\begin{align}
\operatorname{rank}(E_X)&=3,
&\operatorname{rank}\!\begin{bmatrix}E_X\\ \bm\ell\end{bmatrix}&=4,\\
\operatorname{rank}(H_X^{\rm sep})&=46,
&\operatorname{rank}\!\begin{bmatrix}H_X^{\rm sep}\\ \bm\ell\end{bmatrix}&=47.
\end{align}
This rules out the degeneracies identified by the referee: $H_X^{\rm sep}$ is nonempty, and $\bm\ell$ is not defined as the product of the selected seam measurements.

\subsection{Nondegenerate successful synthesis}
\label{app:certified_three_defect}

For
\begin{equation}
M_X=\begin{bmatrix}E_X\\H_X^{\rm sep}\end{bmatrix},
\end{equation}
the compiler obtains
\begin{equation}
\operatorname{rank}(M_X)=
\operatorname{rank}\!\begin{bmatrix}M_X\\\bm\ell\end{bmatrix}=49.
\end{equation}
One returned solution is
\begin{equation}
\bm\alpha=(1,1,1)^\top,
\end{equation}
with a weight-$10$ pre-merge constraint selector whose nonzero row indices are
\begin{equation}
\operatorname{supp}(\bm\gamma)
=\{1,2,3,4,24,25,26,27,28,30\}.
\end{equation}
Thus
\begin{equation}
\bm\alpha^\top E_X\oplus
\bm\gamma^\top H_X^{\rm sep}=\bm\ell.
\label{eq:full_success_certificate}
\end{equation}
Both terms are necessary: $\bm\alpha$ selects all three seam rows, while the nonzero $\bm\gamma$ changes their support to the independently fixed logical representative in Eq.~\eqref{eq:full_target_support}.

\subsection{Paired failure with the same target and pre-merge constraints}
\label{app:certified_failure}

We now keep the same 95-coordinate basis, the same $H_X^{\rm sep}$, and the same target $\bm\ell$, but remove the lower native seam row $\bm r_2$ and every available reconstruction of that measurement. The reduced matrix is
\begin{equation}
M_X^{\rm fail}
=\begin{bmatrix}
\bm r_1\\
\bm r_3\\
H_X^{\rm sep}
\end{bmatrix}.
\end{equation}
Binary elimination gives
\begin{equation}
\operatorname{rank}(M_X^{\rm fail})=48,
\qquad
\operatorname{rank}\!\begin{bmatrix}M_X^{\rm fail}\\\bm\ell\end{bmatrix}=49.
\end{equation}
An explicit dual witness is the 95-coordinate vector with support
\begin{align}
\operatorname{supp}(\bm w)
=\{&(1,1),(3,3),(5,5),(7,7),\nonumber\\
&(9,9),(11,11),(13,13)\}.
\end{align}
Direct verification gives
\begin{equation}
M_X^{\rm fail}\bm w^\top=\bm0,
\qquad
\bm\ell\bm w^\top=1.
\end{equation}
Therefore the same requested logical target that is realizable in Eq.~\eqref{eq:full_success_certificate} becomes non-realizable after only the specified measurement resource is removed. This is a statement relative to the evaluated measurement set and does not exclude a future reconstruction obtained by enlarging the measurement set.

\subsection{Raw-outcome selector}
\label{app:exec_three_defect}

The successful seam selector maps to a raw-measurement selector of weight 13. In the one-based raw-ledger ordering exported with the certificate,
\begin{equation}
\operatorname{supp}(\bm\beta)
=\{1,2,7,8,10,12,22,27,28,30,31,32,34\}.
\end{equation}
Equivalently, the extracted parity is the XOR of the two native seam outcomes and the following 11 local gauge outcomes used to infer the fused row:
\begin{align}
\mathcal G_{23}=\{&g_{(11,5),(11,7)},
 g_{(11,5),(13,5)},\nonumber\\
&g_{(11,9),(11,11)},
 g_{(11,11),(13,11)},\nonumber\\
&g_{(15,1),(15,3)},
 g_{(15,9),(15,11)},\nonumber\\
&g_{(15,9),(17,9)},
 g_{(15,11),(17,11)},\nonumber\\
&g_{(15,13),(17,13)},
 g_{(17,1),(17,3)},\nonumber\\
&g_{(17,5),(17,7)}\}.
\end{align}
Writing $s(\bm r_j)$ for a native seam outcome and $s(g)$ for a raw gauge outcome, the compiler returns
\begin{equation}
s_{X_LX_L}=s(\bm r_1)\oplus s(\bm r_2)
\oplus\bigoplus_{g\in\mathcal G_{23}}s(g).
\end{equation}
This completes the chain from an independently generated logical target, through a genuinely nondegenerate GF(2) certificate, to an explicit raw-outcome rule. The repository validation script reproduces the full matrix equality and the dual-witness checks.

	\bibliographystyle{apsrev4-2}
	\bibliography{sn-bibliography}
	
\end{document}